\begin{document}

\title{Efficient constant factor approximation algorithms for stabbing line segments with equal disks
\thanks{This work was supported by Russian Foundation for Basic Research, project 19-07-01243.
The paper represents a major extension of the short paper appeared in the proceedings of the 12 Annual
International Conference on Learning and Intelligent Optimization (LION 2018) \cite{kobylkin_lion}. It also extends work presented in the
International Conference on Mathematical Optimization Theory and Operations Research (MOTOR 2019)
\cite{kobylkin_dryakhlova}.}}

\author{Konstantin Kobylkin}

\titlerunning{Efficient approximation algorithms for stabbing line segments with equal disks}        


\institute{K. Kobylkin \at
              Krasovsky Institute of Mathematics and Mechanics, Russia, Ekaterinburg, Sophya Kovalevskaya str. 16\\
              Ural Federal University, Russia, Ekaterinburg, Mira str. 19\\
              \email{kobylkinks@gmail.com}           
}

\date{Received: date / Accepted: date}

\maketitle
\begin{abstract}
An NP-hard problem is considered of
intersecting a given set of $n$ straight line segments on the plane
with the smallest cardinality set of disks of fixed radii $r>0,$ where the set of
segments forms a straight line drawing $G=(V,E)$ of a planar graph without proper edge
crossings. To the best of our knowledge, related work only tackles a setting where $E$ consists of
(generally, properly overlapping) axis-parallel segments,
resulting in an $O(n\log n)$-time and $O(n\log n)$-space 8-approximation algorithm.
Exploiting tough connection of the problem with the
geometric Hitting Set problem, an $\left(50+52\sqrt{\frac{12}{13}}+\nu\right)$-approximate $O\left(n^4\log n\right)$-time and
$O\left(n^2\log n\right)$-space algorithm is devised based on the modified Agarwal-Pan algorithm, which uses epsilon nets.
More accurate $(34+24\sqrt{2}+\nu)$- and
$\left(\frac{144}{5}+32\sqrt{\frac{3}{5}}+\nu\right)$-approxi\-mate
algorithms are also proposed for cases where $G$ is any subgraph of
either a generalized outerplane graph or a Delaunay triangulation respectively,
which work within the same time and space complexity bounds, where $\nu>0$ is an arbitrarily small constant.

\keywords{approximation algorithm \and geometric Hitting Set problem \and epsilon net
\and Delaunay triangulation \and line segments}

\end{abstract}

\section{Introduction}
\label{intro}

Design of fast and accurate approximation algorithms is a very important topic in the class of
combinatorial optimization problems, being both NP- and W[1]-hard. Roughly speaking, a problem W[1]-hardness means that
every polynomial time approximation scheme (PTAS) to solve the problem must have complexity of the order
$\Omega\left(L^{f(1/\varepsilon)}\right)$ for a monotonic computable function $f$ if some reasonable conjecture holds true, where
$L$ represents length of the problem input. Many problems from computational geometry
are both NP- and W[1]-hard, including a wide class of problems related to
optimal coverage, piercing or intersection of given families of geometric objects on the plane with
simply shaped objects. Some problems from this class
can be stated in the following general form. Suppose a family ${\cal{F}}$ is given of objects from $\mathbb{R}^2$
of constant complexity, which can be e.g. disks, straight line segments, triangles etc.
The problem is to find the smallest cardinality set ${\cal{D}}$ of translates of a given object $D_0\subset\mathbb{R}^2$
such that either for each $F\in {\cal{F}}$ there is some $D=D(F)\in{\cal{D}},$ which intersects $F$ in some prescribed way,
or $F\subset\bigcup\limits_{D\in{\cal{D}}}D$ for each $F\in {\cal{F}}.$
Design and analysis of approximation algorithms for problems, which can be written in this form, is an area of ongoing research
(see e.g. works \cite{basappa}, \cite{biniaz}, \cite{nandy}, \cite{kobylkin_dryakhlova},
\cite{mudgal}).

In the present paper fast constant factor approximation algorithms are
constructed for a special NP-hard (\cite{nandy}, \cite{kobylkin})
geometric intersection problem, which fits in the general form above. Namely, in this
problem, ${\cal{F}}$ coincides with a set $E$ of straight line segments, ${\cal{D}}$ consists of identical
disks and object intersection is understood in its common sense.
The problem has its applications in
facility location and sensor network deployment.

\noindent {\sc Intersecting Plane Graph with Disks (IPGD)}:
given a straight line drawing (or a plane graph) $G=(V,E)$
of an arbitrary simple planar graph without proper edge crossings
and a constant $r>0,$ find the smallest cardinality
set ${\cal{D}}$ of disks of radius $r$
such that $e\cap\bigcup\limits_{D\in{\cal{D}}}D\neq\varnothing$
for each edge $e\in E.$ Here each isolated vertex $v\in V$ is treated as a zero-length segment
$e_v\in E.$

This problem is obviously equivalent to finding the smallest cardinality point set
$C\subset\mathbb{R}^2$ such that each $e\in E$ is within Euclidean distance $r$
from some point $c=c(e)\in C.$ In fact, $C$ represents a set of centers of radius $r$
disks, forming a solution to the {\sc IPGD} problem.

The latter
equivalent formulation of the {\sc IPGD} problem can further be reduced
to a special case of the well known
geometric {\sc Hitting Set} problem on the plane. In its
general form the {\sc Hitting Set} problem is formulated as follows:

\noindent {\sc Hitting Set}: Given a point set $Y\subseteq\mathbb{R}^2$
and a family ${\cal{R}}$ of subsets (also called {\it objects}) of
$\mathbb{R}^2,$ find the smallest cardinality subset $H\subseteq Y$ such
that $H\cap R\neq\varnothing$ for all $R\in{\cal{R}}.$ Here,
a subset $H\subset \mathbb{R}^2$ is named a {\it hitting set} for ${\cal{R}}$
if $H\cap R\neq\varnothing$ for every $R\in{\cal{R}}.$

A pair $(Y,{\cal{R}})$ is associated with each instance of the {\sc Hitting Set} problem,
which is called a {\it range space}.

To describe a {\sc Hitting Set} formulation of the {\sc IPGD} problem, some notation
is given below.
Suppose
$N_r(e)=\{x\in\mathbb{R}^2:d(x,e)\leq r\},$ ${\cal{N}}_r(E)=\{N_r(e):e\in E\}$
and $d(x,e)$ is Euclidean distance between a point $x\in{\mathbb{R}}^2$
and a segment $e\in E;$ for a zero-length segment $x\in\mathbb{R}^2$
$N_r(x)$ denotes a radius $r$ disk centered at $x.$
Each object from ${\cal{N}}_r(E)$ is a Euclidean $r$-neighborhood of some segment
of $E$ also called {\it $r$-hippodrome} or {\it $r$-offset} in the literature \cite{nandy}.

The {\sc IPGD} problem can equivalently be  formulated as follows:

{\sc Piercing Euclidean Hippodromes (PEH).} Given
a set ${\cal{N}}_r(E)$ of $r$-hippodromes on the plane whose underlying straight line segments
form an edge set of some plane graph $G=(V,E),$
find the minimum cardinality hitting set for ${\cal{N}}_r(E).$

Thus, the {\sc IPGD} problem is reduced to the {\sc Hitting Set} problem with $Y=\mathbb{R}^2$
and ${\cal{R}}={\cal{N}}_r(E),$ where ${\cal{N}}_r(E)$ is a set of $r$-hippodromes,
formed by segments from $E.$

\subsection{Applications}

The {\sc IPGD} problem is of interest in network security analysis, sensor network deployment and
facility location. In the work \cite{nandy} its sensor deployment applications are
reported for road networks. Namely, in this work an application is considered to monitor
a road network using identical sensors
with circular sensing areas. Geometrically, network roads are modeled by piecewise linear arcs
on the plane. One can split these arcs into chains of elementary straight line segments
such that any two of the resulting elementary segments intersect at most
at their endpoints. When full road network surveillance is costly,
it might be a good idea to place the minimum number of sensors
such that each piece of every road (represented by an elementary segment)
is partially covered by sensing area
of some of the placed sensors. Using this network of the placed sensors,
geographic locations of all those vehicles can be identified which move on the road network.
Here coordinates of moving vehicles are given by coordinates of the respective elementary segments.
The aforementioned modeling approach leads to a geometric combinatorial
optimization model, which coincides with the {\sc IPGD} problem.

A variety of settings of optimal location problems to place facilities (e.g. petrol stations)
nearby a given network roads can be reduced to the {\sc IPGD} problem.
Another network security analysis application may also be of interest of the {\sc IPGD} problem
for optical fiber networks, which is inspired by work of
\cite{journals/ton/AgarwalEGHSZ13} and described in detail in \cite{kobylkin}.

\subsection{Related work}

To the best of our knowledge,
settings close to the {\sc IPGD} problem are first considered
in \cite{nandy}.
They explore the case in which ${\cal{D}}$ contains identical disks
and $E$ consists of (generally properly overlapping)
axis-parallel segments. Their algorithms
can easily be extended to the case of sets $E$ of straight line segments with bounded number of
distinct orientations. Moreover, in \cite{nandy} two polynomial time approximation schemes are proposed.
The first one is for the case
of axis-parallel segments and another one is for the case of segments, whose Euclidean lengths are
within a constant factor of $r.$

The {\sc IPGD} problem generalizes a classical NP-hard unit disk covering problem.
In the unit disk covering problem one needs to cover a given finite point set $E$
on the plane with the least cardinality set ${\cal{D}}$ of unit disks. In the {\sc IPGD}
problem setting $E$ generally contains non-zero length segments instead of points.

A PTAS exists \cite{mudgal} based on local search for more general version of
the {\sc IPGD} problem in which disks of ${\cal{D}}$ are chosen from some prescribed
finite set ${\cal{H}}$ of generally non-equal disks. Existence of PTAS can also be
established by reduction of the {\sc IPGD} problem to the {\sc PEH}
problem. More precisely, the fact can be proved that
${\cal{N}}_r(E)$ is a family of closed convex pseudo-disks under some mild restrictions on $E,$
and, then, a PTAS is constructed for the {\sc PEH} problem, using the general approach
from \cite{muray}.

To emphasize a close connection of the {\sc IPGD} problem with the {\sc Hitting Set} problem
a proof sketch is given below of the latter fact.
More precisely, without loss of generality
it can be shown, according to the definition
of pseudo-disks, that $|{\rm{bd}}\,N_1\cap {\rm{bd}}\,N_2|\leq 2$
for any distinct $N_1,N_2\in {\cal{N}}_r(E)$ and both $N_1\backslash N_2$
and $N_2\backslash N_1$ are connected, where ${\rm{bd}}\,N$ denotes
boundary of a set $N\subset\mathbb{R}^2$ and $|N|$ denotes its cardinality.

Indeed, as straight line segments from $E$ intersect at most at their endpoints,
segments of $E$ can slightly be shifted to become pairwise disjoint and non-parallel
while keeping all nonempty intersections of subsets of objects from ${\cal{N}}_r(E)$ with some slightly larger $r.$
For two non-overlapping segments $e$ and $e'$ it can be understood that $|{\rm{bd}}\,N_r(e)\cap{\rm{bd}}\,N_r(e')|\leq 2$
because Euclidean distance grows strictly monotonically
from $e$ (or from $e')$ to a point of the curve $\chi(e,e')=\{x\in\mathbb{R}^2:d(x,e)=d(x,e')\}$\footnote{In fact
the curve $\chi(e,e')$ is composed
of pieces of straight lines and parabolas}
as that point moves along $\chi(e,e')$ in any of two opposite directions starting from midpoint of the segment $[x,x']$
which joins the points $x\in e$ and $x'\in e',$ where $d(x,x')=d(e,e')$ and $d(e,e')$ denotes Euclidean distance between
$e$ and $e'.$ Here, obviously, ${\rm{bd}}\,N_r(e)\cap {\rm{bd}}\,N_r(e')\subset\chi(e,e').$
Points from ${\rm{bd}}\,N_r(e)\cap{\rm{bd}}\,N_r(e')$ split ${\rm{bd}}\,N_r(e)$ into two subcurves,
lying on different sides of $\chi(e,e').$ Therefore one of those two subcurves is contained in ${\rm{int}}\,N_r(e'),$
where ${\rm{int}}\,N$ denotes interior of $N\subset\mathbb{R}^2.$
This implies that $N_r(e')\backslash N_r(e)$ is connected and concludes the proof
of the fact that ${\cal{N}}_r(E)$ can be considered to be a set of pseudo-disks.

When pairs of segments of $E$ are allowed to intersect properly and admitted to have
arbitrarily large number of distinct orientations,
it is difficult to achieve a constant factor approximation at least by using known
approaches. It is due to the non-constant lower bound obtained in \cite{alon} on integrality gap of a
geometric intersection problem, which is close to the {\sc IPGD} problem for $r=0.$

\subsection{Our algorithmic contribution}

In contrast to related work this paper is focused on design of
$O(1)$-approxi\-ma\-tion algorithms for the NP-hard {\sc IPGD} problem where segments
from $E$ are allowed to intersect at most at their endpoints, can
have arbitrarily large Euclidean lengths and arbitrarily large number
of distinct orientations. Moreover, the paper focus is on algorithms,
giving a favourable combination of guaranteed constant approximation factor and time
complexity.

There is a major challenge in design of efficient constant factor approximation algorithms for the
{\sc IPGD} problem. Namely, an ugly tradeoff is observed
between guaranteed constant approximation factor and time complexity of the known approximation algorithms
for the problem. This means that in existing approximation algorithms good accuracy (i.e.
approximation factor close to $1)$ can be guaranteed only with high computational cost, which is unacceptable
in practice. Such a situation is very typical in a variety of geometric {\sc Hitting Set} problems on the plane
in which corresponding families ${\cal{R}}$ contain objects of more or less sophisticated shape.

Indeed, exploiting equivalence of the {\sc IPGD} problem to the {\sc PEH} problem
on the set ${\cal{N}}_r(E)$ of pseudo-disks, an $O(1)$-approximation can be designed based on epsilon nets \cite{pyrga}
and Agarwal-Pan iterative reweighting algorithm \cite{agarwal}. It has reasonable time complexity (roughly of $O(n^4))$
but its guaranteed constant approximation factor is large, being more than $100000.$

Of course, PTAS for the {\sc IPGD} problem can be adapted to design an $O(1)$-approximation algorithm
by setting $\varepsilon$ equal to some small constant. According to the introduction in \cite{mulimits},
the resulting constant factor approximation algorithm has huge time complexity of $O(n^{30}).$
Successful attempts are taken to adapt local search for design of faster low constant factor approximation
algorithms. In particular, local search approach has become quite competitive for designing
of low constant factor approximations for
geometric {\sc Hitting Set} problems with families of disks \cite{mulimits} and pseudo-disks \cite{muhall}.
For the {\sc IPGD} problem an algorithm from \cite{muhall}, constructed for pseudo-disks, yields 4-approximate solution
in $O(n^{13})$ time, being too complicated.

Thus, direct application of the aforementioned general algorithms for pseudo-disks to the
{\sc IPGD} problem without relying deeply on
geometry of the range space $(\mathbb{R}^2,{\cal{N}}_r(E))$ results in
the severe tradeoff between their guaranteed constant approximation factor and time complexity.
Some other less general algorithmic techniques fail to work for the {\sc IPGD} problem
due to its relatively general setting.

In this work, to construct approximation algorithms
with better combination of guaranteed approximation factor and time complexity
than that of the aforementioned epsilon net based
approximation algorithm for pseudo-disks, implied by results from \cite{agarwal} and \cite{pyrga},
the {\sc IPGD} problem geometry is incorporated into slightly modified version of the latter algorithm.
Namely, a simple to implement
$\left(50+52\sqrt{\frac{12}{13}}+\nu\right)$-approximation is proposed for the {\sc IPGD} problem
working in $O\left(\left(n^2+\frac{n\log n}{\nu^2}+\frac{\log n}{\nu^3}\right)n^2\log n\right)$ time. Moreover, $(34+24\sqrt{2}+\nu)$- and $\left(\frac{144}{5}+32\sqrt{\frac{3}{5}}+\nu\right)$-appro\-xi\-mations are given for special segment configurations $E$ defined by generalized outerplane graphs and Delaunay triangulations, arising in network applications.
The latter two algorithms work within the same time complexity bounds,
where $\nu>0$ is an arbitrary small constant.
Though guaranteed approximation factors of our algorithms are larger than $4,$
it is very likely that their actual approximation factors are small in practice.

\subsection{Brief description of our algorithms and approaches}\label{gjfjdkksksk}

Constant factor approximation algorithms for the {\sc IPGD} problem proposed in this paper
in fact approximately solve the {\sc PEH}
problem on the corresponding set ${\cal{N}}_r(E)$ of $r$-hippodromes.
Below their general layout and the most important stages are briefly discussed.

\subsubsection{General layout of our approximation algorithms}

Our algorithms contain three main stages.
At the first stage
the {\sc PEH} problem input is preprocessed
to simplify work at the subsequent stages.
The idea of this preprocessing is to remove
from ${\cal{N}}_r(E)$ all $r$-hippodromes,
which have empty intersection with the rest.

{\sc Stage 1: preprocessing.}
Every object $N\in{\cal{N}}_r(E)$ is identified
such that no other object from ${\cal{N}}_r(E)$ intersects $N.$
This can be done in $O(n^2)$ time.
Let ${\cal{K}}$ be a set of all such objects.
One proceeds to the next two stages to find an approximate solution $S$
to the {\sc PEH} problem for the family ${\cal{N}}_r(E)\backslash{\cal{K}}.$
When it is done, a point set $C:=S\cup C_0$ is returned as the final output,
where $C_0=\{c_K:K\in{\cal{K}}\}\subset\mathbb{R}^2$
is such that $c_K\in K$ for each $K\in{\cal{K}}.$

At the second stage the {\sc PEH} problem is discretized. The idea behind this discretization
is as follows. Let $f>0$ be some absolute constant.
Instead of finding an $f$-approximate solution $S$ to the {\sc PEH} problem,
being a finite subset of the whole plane, its special $f$-approximate solution $S_0\subseteq Y_0$ can always be found,
where $Y_0$ is some finite precomputed point set, uniquely defined by the {\sc PEH} problem input.

This discretization allows us to apply a machinery of epsilon nets at the next (third) stage.
This algorithmic machinery is
developed in \cite{agarwal} and \cite{pyrga} for designing of approximation
algorithms for {\sc Hitting Set} problems on discrete range spaces $(Y,{\cal{R}}),$
where discreteness of the range space means that $Y$ is finite.

{\sc Stage 2: discretization.}
Let ${\cal{B}}(E)=\{{\rm{bd}}\,N_r(e):e\in E\}.$ A vertex set of
the arrangement ${\cal{A}}(E,r)$ of curves of ${\cal{B}}(E)$
is used as the set $Y_0.$ Of course, $|Y_0|=O(n^2)$ and $Y_0$
can be constructed in the straightforward way by computing
pairwise intersections of curves from ${\cal{B}}(E).$

It is easy to see that the
{\sc PEH} problem instance can equivalently be reduced to the {\sc Hitting Set}
problem instance for $(Y_0,{\cal{N}}_r(E)).$
Indeed, any
$k$-element hitting set $C$ for ${\cal{N}}_r(E)$ can be converted
into a $k$-element hitting set $C'\subseteq Y_0$ by some sort of (polynomial) point location
algorithm \cite{boiso} on the arrangement ${\cal{A}}(E,r)$
\footnote{Being applied for a point $c\in C$ and the arrangement ${\cal{A}}(E,r),$
point location algorithm identifies
either a vertex, edge or face of the arrangement, which contains $c;$ after that,
either the vertex $c$ or an arbitrary vertex of the edge or face, containing $c,$ can be returned.}.
As a consequence
${\rm{OPT}}(\mathbb{R}^2,{\cal{N}}_r(E))=
{\rm{OPT}}(Y_0,{\cal{N}}_r(E))={\rm{OPT}},$ where
${\rm{OPT}}(Y,{\cal{R}})$
denotes optimum of the {\sc Hitting Set} problem
for a given range space $(Y,{\cal{R}}).$

Then, one
proceeds to the next (main) stage, now dealing with the discrete range space $(Y_0,{\cal{N}}_r(E))$
instead of the range space $(\mathbb{R}^2,{\cal{N}}_r(E)),$
where a point set $Y_0$ is defined as above.
Algorithmic work performed at the main stage is similar in its
structure to the Agarwal-Pan algorithm from \cite{agarwal}.
To describe what is done precisely at this stage,
some notation and definitions are given first.

A map $w:Y_0\rightarrow\mathbb{Q}_+$ is called a {\it weight map} on $Y_0$ in the sequel,
meaning that a positive rational value $w(y)$ is assigned to each point $y\in Y_0.$
Let $w(N)=\sum\limits_{y\in N\cap Y_0}w(y)$ for $N\subseteq\mathbb{R}^2.$
Given ${\cal{N}}\subseteq{\cal{N}}_r(E)$ and a weight map $w$ on $Y_0,$
a triple $(Y_0,{\cal{N}},w)$ is used to denote a range space when
points from $Y_0$ have, generally, non-equal positive rational weights.
Range spaces $(Y_0,{\cal{N}})$ and $(Y_0,{\cal{N}},w_0)$ are considered equivalent, where
$w_0$ is the unit weight map, i.e. $w_0(y)=1$ for all $y\in Y_0$ and $w_0(N)=|N\cap Y_0|$ for $N\subseteq\mathbb{R}^2.$
\begin{definition}
Assume that $0<\varepsilon<1,$ $w:Y_0\rightarrow\mathbb{Q}_+$ and ${\cal{N}}\subseteq{\cal{N}}_r(E)$ are given. Let ${\cal{N}}_{\varepsilon}=\{N\in{\cal{N}}:w(N)>\varepsilon w(Y_0)\}.$
A finite subset $Y'\subseteq\mathbb{R}^2$ is called a {\it (weighted) weak $\varepsilon$-net}
(see also definition in \cite{alon})
for a range space $(Y_0,{\cal{N}},w)$
if $Y'\cap N\neq\varnothing$ for any $N\in{\cal{N}}_{\varepsilon},$
i.e. $Y'$ is a hitting set for ${\cal{N}}_{\varepsilon}.$
\end{definition}

{\sc Main stage: computing weight maps and epsilon nets.}
Work of our algorithms is quite involved at the main stage.
Its most important steps are described briefly at first; then,
a simplified pseudo-code is presented for clarification.

At the main stage the following two steps are performed in turn within a binary search loop:

{\sc Step 1: finding a weight map.}
During the first step either unit weights are assigned to all points
from $Y_0$ or a special
weight map $w:Y_0\rightarrow\mathbb{Q}_+$ is computed.

{\sc Step 2: constructing an epsilon net.}
The second step represents calls of a special procedure,
which, given a weight map $w$ on $Y_0,$
constructs weak $\varepsilon$-nets of cardinality at most $O\left(\frac{1}{\varepsilon}\right)$
for subspaces of the
range space $(Y_0,{\cal{N}}_r(E),w).$
More precisely, let ${\cal{N}}\subseteq{\cal{N}}_r(E),$ a weight map $w:Y_0\rightarrow\mathbb{Q}_+$ and an arbitrary $\varepsilon>0$ are given.
A procedure is referred to as an {\it epsilon net finder} if it
seeks a weak $\varepsilon$-net $C=C(Y_0,{\cal{N}},w,\varepsilon)\subset \mathbb{R}^2$ of size at most $\frac{M}{\varepsilon}$ for $(Y_0,{\cal{N}},w),$ where $M\geq 1$ is an absolute constant, which is specific to this procedure; it is called its performance parameter.
For $\varepsilon\geq 1$ the procedure returns $C=\varnothing.$

Finally, $\varepsilon$ represents the parameter, which is adjusted within the binary
search loop over the two steps above. Namely, using a sort of a trial and error method (see, e.g. \cite{agarwal},\cite{bronnimann}),
its reciprocal $1/\varepsilon$ is adjusted
to be as close to ${\rm{OPT}}$ as possible.
Besides, roughly speaking, at the first step one tries to compute a special weight map $w$ on $Y_0$ such that
${\cal{N}}_{\varepsilon}={\cal{N}}_r(E).$ At the same time,
a chosen epsilon net finder, called at the second step, returns hitting sets for
${\cal{N}}_{\varepsilon}$ of size at most $\frac{M}{\varepsilon}.$ As
$\frac{1}{\varepsilon}$ tends to be close to ${\rm{OPT}},$ one finally gets an $O(1)$-approximate
solution to the {\sc PEH} problem.

Pseudo-code of our algorithms is given below.
Let $\mu_0>0$ and $\mu>1$ be some absolute constants to be defined later.

\smallskip

\hrule
\smallskip
\noindent {\sc Piercing Hippodromes.}
\smallskip
\hrule
\smallskip

\noindent {\bf Input:} $r>0$ and a family ${\cal{N}}_r(E)$ of $r$-hippodromes, where $E$ is an edge set of a plane graph;

\noindent {\bf Output:} $O(1)$-approximate solution to the {\sc PEH} problem for ${\cal{N}}_r(E).$
\smallskip

\hrule
\begin{enumerate}
\item compute ${\cal{K}}=\{K\in{\cal{N}}_r(E):K\cap N=\varnothing\,\,\forall N\in{\cal{N}}_r(E), N\neq K\},$
set $E:=E\backslash\{e\in E:N_r(e)\in {\cal{K}}\},$
$k:=1$ and construct a vertex set $Y_0$ of the arrangement of
curves from ${\cal{B}}(E)=\{{\rm{bd}}\,N_r(e):e\in E\};$

\item find a weak $\frac{1}{\mu_0 k}$-net $C_k$ for $(Y_0,{\cal{N}}_r(E),w_0)$ of size at most $M\mu_0 k;$

\item set ${\cal{N}}_k:=\{N\in{\cal{N}}_r(E):N\cap C_k=\varnothing\};$

\item set $\varepsilon_k:=\frac{1}{\mu k},$ compute a weight map $w_k=w_k(\cdot |Y_0,{\cal{N}}_k,k)$ on $Y_0$ such that $w_k(N)>\varepsilon_kw_k(Y_0)$ for all $N\in{\cal{N}}_k$ if $w_k$ can be algorithmically computed\footnote{At this step the algorithm also determines if such a weight map $w_k$ can in principle be computed.};
    if it can be, set flag:=true; otherwise, set flag:=false;

\item if flag=false, set $k:=2k$ and repeat steps 2-4; otherwise, set $k_p:=k;$

\item repeating steps 2-4 within a binary search loop for $k$ in the interval $(k_p/2, k_p],$ find the smallest $k=k_f\in (k_p/2,k_p]$ for which performing those steps gives flag=true;

\item find a weak $\varepsilon_{k_f}$-net $C'_{k_f}$ for $(Y_0,{\cal{N}}_{k_f},w_{k_f})$ of size at most $\frac{M}{\varepsilon_{k_f}};$

\item return $C=C_0\cup C_{k_f}\cup C'_{k_f}$ as an $M(\mu_0+\mu)$-approximate solution to the {\sc PEH} problem,
where $C_0=\{c_K:K\in{\cal{K}}\}\subset\mathbb{R}^2$ and $c_K\in K$ for each $K\in{\cal{K}}.$
\end{enumerate}
\hrule
\smallskip

Although, pseudo-code of our algorithms tackles the case of nonempty ${\cal{K}},$
it is assumed below that ${\cal{K}}=\varnothing$ for simplicity of presentation.

An important step of the {\sc Piercing Hippodromes} algorithm
is its step 4 where a special weight map $w_k:Y_0\rightarrow\mathbb{Q}_+$ is computed.
A procedure is used to implement this step, which is
a slightly modified version of the iterative reweighting procedure
from \cite{agarwal}. This modified procedure is described
in detail in the section \ref{pkgkslsls}. Its performance analysis, given in the proof of
the theorem \ref{ksakldkekeksa},
shows that the
{\sc Piercing Hippodromes} algorithm finally arrives at the case flag=true
at its step 4. Moreover, the value of $k_p$ is obtained at its step 5 such that
either $k_p/2<{\rm{OPT}}(Y_0,{\cal{N}}_{k_p})\leq k_p$ or $k_p<{\rm{OPT}}(Y_0,{\cal{N}}_{k_p}),$
thus, giving $k_f\leq {\rm{OPT}}(Y_0,{\cal{N}}_{k_f}).$

It is also shown in this paper that parameters $\mu$ and $\mu_0$ can be chosen such that
the upper bound $M(\mu_0+\mu)$ on constant approximation factor of the {\sc Piercing Hippodromes} algorithm
depends mostly on the performance parameter $M$ of the epsilon net finder, running at its steps 2 and 7.
The {\sc Piercing Hippodromes} algorithm time complexity turns out to be of the same order (up to logarithmic factors)
as time complexity of the epsilon net finder.

\subsubsection{Our geometric approaches to design an epsilon net finder}

In this work a special algorithmic scheme is used to devise epsilon net finders, which is based
on ideas from work \cite{pyrga}. Key approaches (given
in \cite{pyrga} and ours) to build such procedures with small
values of $M$ for subspaces of $(Y_0,{\cal{N}}_r(E),w)$ are reported in short below.
Implementation of those procedures and summarizing on their
performances are postponed for the section \ref{fhskakjwwoqo}.

Our epsilon net finders
follow a slightly modified general algorithmic scheme from \cite{pyrga},
applied for subspaces of
the specific range space $(Y_0,{\cal{N}}_r(E),w).$
To give a short description of how those procedures work, let us begin with the following
simple algorithmic idea, which gives an $O(1)$-approximation algorithm
for a special case of the {\sc PEH} problem
in which all segments from $E$ have zero lengths, i.e. the corresponding set ${\cal{N}}_r(E)$ is composed of radius $r$ disks.
Namely, the idea consists in applying a ``divide-and-conquer'' heuristic, which extracts a
{\it maximal independent} set ${\cal{I}}\subseteq{\cal{N}}_r(E)$ of radius $r$
disks within ${\cal{N}}_r(E),$ i.e. a
maximal (with respect to inclusion) subset ${\cal{I}}$ of pairwise non-overlapping disks
from ${\cal{N}}_r(E)$\footnote{In distinction to the known NP-hard Maximum Independent Set problem for disks, the problem of finding
a maximal (by inclusion) subset of non-intersecting disks among disks from ${\cal{N}}_r(E)$ is polynomially solvable. The corresponding algorithm to solve the problem incrementally adds disks into a growing independent set starting from an empty one.}.
It should be noted that
${\cal{N}}_r(E)=\bigcup\limits_{I\in{\cal{I}}}{\cal{N}}_I,$ where
${\cal{N}}_I=\{N\in {\cal{N}}_r(E):N\cap I\neq\varnothing\}$ for $I\in{\cal{I}}.$
As $7$ radius $r$ disks are sufficient
to cover any $2r$ radius disk, for each $I\in{\cal{I}}$ a
$7$-point set $S_I$ can be easily constructed in $O(1)$ time, which has nonempty intersection with
each disk from ${\cal{N}}_I.$
Therefore a set $\bigcup\limits_{I\in{\cal{I}}}S_I$
gives a $7$-approximate solution to the {\sc PEH} problem as $|{\cal{I}}|\leq{\rm{OPT}}.$

In the general setting of the {\sc PEH} problem this ``divide-and-conquer'' heuristic is not working
as one can not guarantee constant sized hitting
sets for ${\cal{N}}_I$ to exist uniformly for all
$I\in{\cal{I}}.$ In fact, it is not possible because, first, Euclidean lengths
are not assumed to be uniformly bounded from above of segments from $E$ by any linear function
of $r.$ Second, one can not cluster segments into constant number of
groups with similar segment orientations and apply this heuristic (with an additional modification)
in each group separately as done e.g. in \cite{nandy} for the case of sets of axis-parallel straight line segments.

Instead, in this paper a similar algorithmic
idea is adopted to design an epsilon net finder.
This idea relies conceptually on the approach of work \cite{pyrga},
being its slightly improved version. Given ${\cal{N}}\subseteq{\cal{N}}_r(E),$
it exercises a similar ``divide-and-conquer'' heuristic,
which suggests to find a maximal subset ${\cal{I}}$
of {\it almost} non-overlapping objects within the set
${\cal{N}}_{\varepsilon}.$
It means that pairs of objects from
${\cal{I}}$ are allowed to have nonempty intersection, but
``amount'' of this intersection does not exceed some fraction of $w(Y_0).$
More precisely, the following definition describes
properties of ${\cal{I}}$ in detail (see also general definition in \cite{pyrga}).
\begin{definition}
Given a subset ${\cal{N}}\subseteq{\cal{N}}_r(E),$ a parameter $0\leq\delta<1$ and a weight map $w:Y_0\rightarrow\mathbb{Q}_+,$
a subset ${\cal{I}}={\cal{I}}(\delta)\subseteq{\cal{N}}$ is called
a {\it maximal (with respect to inclusion) $\delta$-independent} for a range space $(Y_0,{\cal{N}},w)$
if $$w(I\cap I')\leq\delta w(Y_0)$$ for any distinct $I,I'\in{\cal{I}}$
and for any $N\in{\cal{N}}$ there is some $I=I(N)\in{\cal{I}}$ such that $w(N\cap I)>\delta w(Y_0).$
\end{definition}

As $Y_0$ is the vertex set of the arrangement of curves from ${\cal{B}}(E)$
and $w(y)>0$ for every $y\in Y_0,$ any
maximal $0$-independent set for $(Y_0,{\cal{N}},w)$ is
a maximal independent set within ${\cal{N}}.$

Let $0\leq\theta_0<1$ be some parameter to be defined later.
Pseudo-code of our epsilon net finder procedure is given below:

\smallskip

\hrule
\smallskip
\noindent {\sc Weak Epsilon Net Finder.}
\smallskip
\hrule
\smallskip

\noindent {\bf Input:} a range space $(Y_0,{\cal{N}},w)$ for some ${\cal{N}}\subseteq{\cal{N}}_r(E),$
$w:Y_0\rightarrow\mathbb{Q}_+$ and a parameter $0<\varepsilon<1;$

\noindent {\bf Output:} a weak $\varepsilon$-net for $(Y_0,{\cal{N}},w)$
of size at most $\frac{M}{\varepsilon}$ for some absolute constant $M\geq 1.$
\smallskip

\hrule

\begin{enumerate}

\item set $\delta:=\theta_0\varepsilon,$ find a maximal $\delta$-independent set ${\cal{I}}={\cal{I}}(\delta)\subseteq
{\cal{N}}_{\varepsilon}$ for $(Y_0,{\cal{N}}_{\varepsilon},w)$ and form disjoint sets ${\cal{N}}_{\delta,I},\,I\in {\cal{I}},$
    such that $\bigcup\limits_{I\in{\cal{I}}}{\cal{N}}_{\delta,I}={\cal{N}}_{\varepsilon},$
    where  ${\cal{N}}_{\delta,I}\subseteq{\cal{N}}^0_{\delta,I}=\{N\in{\cal{N}}_{\varepsilon}:w(N\cap I)>\delta w(Y_0)\};$

\item for each $I\in{\cal{I}}$ compute a hitting set
$C_I$ for ${\cal{N}}_{\delta,I}$ of size at most $\frac{c_1w(I)}{\delta w(Y_0)}+c_2$
for some positive constants $c_1$ and $c_2;$

\item return the set $C_{\theta_0}=\bigcup\limits_{I\in{\cal{I}}}C_I.$
\end{enumerate}
\hrule
\smallskip

At its step 1 the {\sc Weak Epsilon Net Finder} procedure forms a partition of ${\cal{N}}_{\varepsilon}$ into
disjoint subsets ${\cal{N}}_{\delta,I};$ here each subset ${\cal{N}}_{\delta,I}$ does not have to
coincide with ${\cal{N}}^0_{\delta,I},$ being a result of the partitioning process for ${\cal{N}}_{\varepsilon}.$
The way in which the partitioning is done within the {\sc Weak Epsilon Net Finder} procedure is slightly different from the way in which it is done in an epsilon net finder, resulting from applying the approach of work \cite{pyrga}
directly to subspaces of $(Y_0,{\cal{N}}_r(E),w).$
It is demonstrated in the subsubsection \ref{skskfkgeoqoapdlgls} that our modification gives the smaller value of the
performance parameter $M.$
As $|Y_0|=O(n^2)$ the procedure step 1
can be implemented in a straightforward way without relying on the {\sc PEH} problem specifics
(see proof of the lemma \ref{dskskskkroroeowoqp}
from the subsection \ref{gkfkkskslf_glro}).

There are two basic parts of the {\sc Weak Epsilon Net Finder} procedure implementation
which rely on geometry of shape of $r$-hippodromes
and the {\sc PEH} problem specifics. They are the main algorithmic result of this paper.
The first part consists in a special subprocedure
which is aimed at computing a hitting set of size at most $\frac{c_1w(I)}{\delta w(Y_0)}+c_2$
with small nonnegative constants $c_1$ and $c_2$ for a family ${\cal{N}}_{\delta,I},$
corresponding to an object $I\in{\cal{I}}$
of any maximal $\delta$-independent set ${\cal{I}}$ for $(Y_0,{\cal{N}}_{\varepsilon},w).$
In the subsection \ref{qeiaididis} several fast variants are given of such subprocedures,
guaranteeing small constants $c_1$ and $c_2$ (also called their performance parameters)
under different assumptions on $E,$ including the general case in which $E$ belongs to a class ${\cal{E}}_0$
 of edge sets of arbitrary plane graphs. Those subprocedures in fact construct weak $\Delta_I$-nets for subspaces $(Y_0\cap I,{\cal{N}},w_I),\,I\in{\cal{I}},$
of size at most $\frac{c_1}{\Delta_I}+c_2,$ where $\Delta_I=\frac{\delta w(Y_0)}{w(I)}$
and $w_I=w\mid_{Y_0\cap I}.$ They are called {\it subspace epsilon net finders} in the sequel.

The second part consists in choosing a suitable value of the parameter $\theta_0,$ scaling
the upper bound $\delta w(Y_0)$ on intersection weight of pairs of objects from ${\cal{I}}.$
Indeed, it is not an easy task to choose $\theta_0$ to get a modest value
of the performance parameter $M.$
Namely, there is a tradeoff between constants in the two bounds
$|{\cal{I}}|=O\left(\frac{1}{\varepsilon}\right)$ and $\frac{\sum\limits_{I\in{\cal{I}}}w(I)}{\delta w(Y_0)}=
O\left(\frac{1}{\varepsilon}\right),$ implying the bound $|C_{\theta_0}|\leq\frac{M}{\varepsilon}.$
Setting $\theta_0$ either large or close to zero may result in a large
constant either in the bound $|{\cal{I}}|=O\left(\frac{1}{\varepsilon}\right)$ or in the bound
$\frac{\sum\limits_{I\in{\cal{I}}}w(I)}{\delta w(Y_0)}=
O\left(\frac{1}{\varepsilon}\right)$ respectively.

In the subsection \ref{gkfkkskslf_glro} a geometric approach is used to adjust the parameter $\theta_0,$ originating from work \cite{pyrga}.
Given a subclass ${\cal{E}}\subseteq{\cal{E}}_0,$
the approach allows to get a value $\theta^{\ast}_0=\theta^{\ast}_0({\cal{E}})$ of this parameter, finally leading to a small
value $M^{\ast}=M^{\ast}({\cal{E}})$ of $M$ in the upper bound on $|C_{\theta^{\ast}_0}|,$ which
holds true uniformly within the class ${\cal{E}}.$ As a result, it guarantees a small
constant upper bound on the approximation factor of the {\sc Piercing Hippodromes} algorithm to hold within ${\cal{E}}.$

For the case $c_1=0$ minimizing the performance parameter $M$ with respect to $\theta_0$
implies that $\theta_0\rightarrow 0.$
Thus, in this case the {\sc Weak Epsilon Net Finder} procedure is similar in its layout to the 7-approximation algorithm, described above for the case where $E$ consists of points.

\section{Weak Epsilon Net Finder: implementation and performance analysis}\label{fhskakjwwoqo}

\subsection{Subspace epsilon net finders}\label{qeiaididis}

Three subquadratic subspace epsilon net finders are
constructed below to be applied at the step 2 of the {\sc Weak Epsilon Net Finder} procedure.
These procedures have small performance parameters $c_1$ and $c_2,$ being designed for
different classes of sets of non-zero length straight line segments.
They largely amount to fast construction of the least cardinality hitting sets for sets of 1-dimensional intervals on the real line.

More specifically, the first subspace epsilon net finder treats the general case where
segments from $E$ form an edge set of some plane graph $G,$ giving $c_1=8$ and $c_2=2.$
The second subspace epsilon net finder works for the special case where $E$ is such that
either $d(e,e')>r$ or $d(e,e')=0$ for any distinct $e,e'\in E.$ It reports $c_1=1$ and $c_2=6.$
The last subspace epsilon net finder tackles the case in which $E$ is being an
edge set of any subgraph of a Delaunay triangulation $G.$
Graphs of this type arise in network routing and modelling applications. This procedure gives $c_1=4$ and $c_2=8.$

Let a subset ${\cal{N}}\subseteq{\cal{N}}_r(E),$ an object $I\in{\cal{N}},$
an arbitrary $0<\Delta<1,$ a nonempty finite point set $F\subset\mathbb{R}^2$
and a weight map $w_I:F\cap I\rightarrow\mathbb{Q}_+$ are given.
Generally, proposed subspace epsilon net finders are assumed to
accept an object $I$ and a subset
${\cal{N}}_I(\Delta)\subseteq\{N\in{\cal{N}}:w_I(N)>\Delta w_I(I)\}$
as their input. Their output is a hitting set for ${\cal{N}}_I(\Delta)$ of size at most $\frac{c_1}{\Delta}+c_2$ for some nonnegative constants
$c_1$ and $c_2.$ It is shown below that they work in $O(m\log m)$ time and $O(m)$ space, where $m=|{\cal{N}}_I(\Delta)|.$

When they are applied within the {\sc Weak Epsilon Net Finder} procedure,
their input is an object $I$ of a maximal $\delta$-independent set for $(Y_0,{\cal{N}}_{\varepsilon},w)$ and a subset ${\cal{N}}_{\delta,I}$ from
the corresponding partition of ${\cal{N}}_{\varepsilon},$ built at its step 1.
It corresponds to setting $F=Y_0,$ $\Delta=\frac{\delta w(Y_0)}{w(I)}$ and $w_I=w\mid_{Y_0\cap I}.$

\subsubsection{General case of plane $G$}\label{qyuuififis}

Let $E$ be an edge set of a plane graph $G.$
The following observations can be made
about shape of $r$-hippodromes.
\medskip

\noindent {\bf Observation 1.} Let $e,e'\in E$ be such that $M=N_r(e)\cap N_r(e')\neq\varnothing.$
Then $M=N_r(z_{e'}(e))\cap N_r(e')=N_r(e)\cap N_r(z_e(e')),$ where $z_e(e')=\{x\in e':d(x,e)\leq 2r\}.$
\medskip

Let $l(e)$ be a straight line through $e$ for some non-zero length segment $e\in E$
and $h_1(e)$ and $h_2(e)$ be positive and negative halfplanes respectively
whose boundary coincides with $l(e);$ here
orientation is chosen arbitrarily for $l(e).$ The set ${\rm{bd}}\,N_r(e)$ can
be represented in the form of a union of two halfcircles and
two segments $f_1(e)$ and $f_2(e),$ where $f_i(e)\subset {\rm{int}}\,h_i(e),\,i=1,2.$
Let $l_i(e)$ be the straight line through $f_i(e).$

\medskip
\noindent {\bf Observation 2.} Let $\{v_1,v_2\}=l(e)\cap{\rm{bd}}\,N_r(e),\,e\in E.$
For every $e,e'\in E$ for which $N_r(e)\cap N_r(e')\neq\varnothing$ either
$N_r(e')\cap\{v_1,v_2\}\neq\varnothing$ or
such $i_0\in\{1,2\}$ exists that $d(x,l_{i_0}(e))\leq r$ for all $x\in z_{e}(e').$
\medskip

Let us provide some notation. Given a subset
${\cal{M}}\subseteq{\cal{N}}_r(E),$ a subset $E({\cal{M}})$ is such that
${\cal{M}}={\cal{N}}_r(E({\cal{M}}));$ in particular, for $N\in{\cal{N}}_r(E)$
let $e(N)\in E$ be a segment such that $N=N_r(e(N)).$

Our subspace epsilon net finder is based on finding hitting sets for sets of
1-dimensional $r$-neighbourhoods of (interval) projections of segments from
$\{z_e(e')\}_{e'\in E'},\,E'\subseteq E,$ onto straight lines $l_i(e).$
Let $N_{ir}(f)=\{x\in l_i(e):d(x,f)\leq r\}$ for
an arbitrary interval $f\subset l_i(e),\,i=1,2.$
The following folklore lemma reports on the complexity
of getting minimum cardinality hitting set for a set of 1-dimensional intervals.
Its proof is left for the appendix.
\begin{lemma}\label{gjfkdkksaadqqrsra}
The minimum cardinality hitting set can be found for a set of $n$ 1-dimensional intervals
on the real line in $O(n\log n)$ time and $O(n)$ space.
\end{lemma}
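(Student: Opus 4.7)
The plan is to prove the lemma via the classical greedy algorithm for interval point cover (also known as the minimum piercing problem for intervals). Let the given intervals be $I_1,\ldots,I_n$ with $I_j=[a_j,b_j]$. First I would sort the intervals in nondecreasing order of their right endpoints $b_j$; this takes $O(n\log n)$ time and $O(n)$ space using any standard comparison sort. Then I would scan through the sorted list maintaining the position $p$ of the most recently chosen hitting point (initialized to $-\infty$): for each interval $I_j$ in sorted order, if $a_j>p$, add $b_j$ to the hitting set $H$ and update $p:=b_j$; otherwise $I_j$ is already hit by $p$ and is skipped. The scanning phase clearly runs in $O(n)$ time with $O(n)$ auxiliary space for the sorted permutation and the output set $H$.

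The key is correctness. Let $H=\{p_1<p_2<\cdots<p_k\}$ be the set produced by the greedy algorithm in the order the points are added. Feasibility follows directly from the construction: by induction, every interval processed either has $a_j\le p_{t}$ for the current last point (and is hit by $p_t$) or triggers the insertion of $p_{t+1}=b_j\in I_j$. For optimality I would use a standard exchange argument. Let $H^*=\{q_1<q_2<\cdots<q_m\}$ be any optimal hitting set. I would show by induction on $t$ that one may assume $q_t\ge p_t$ for every $t\le\min(k,m)$. Indeed, the interval $I_{j_1}$ that triggered $p_1$ has the smallest right endpoint among all intervals, so the point $q_1$ of $H^*$ that hits $I_{j_1}$ must satisfy $q_1\le b_{j_1}=p_1$; we may replace $q_1$ by $p_1$ without losing any interval hit by $q_1$, because every such interval $I$ satisfies $a\le q_1\le b_{j_1}\le b$, hence contains $p_1$. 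Iterating the exchange shows $m\ge k$, i.e.\ the greedy output is optimal.

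The main obstacle, which is really just a bookkeeping matter rather than a genuine difficulty, is justifying the exchange step cleanly when intervals share endpoints or when some $q_t$ coincides with several greedy choices; this is handled by the $\le/\ge$ invariants above. Combining correctness with the complexity of sorting and the single linear pass yields the claimed $O(n\log n)$ time and $O(n)$ space bounds.
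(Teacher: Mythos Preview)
Your proof is correct and follows the standard left-to-right greedy for interval piercing: sort by right endpoint, then stab each still-unhit interval at its right endpoint, with optimality certified by an exchange argument. The paper arrives at the same result by a slightly more roundabout route: it first strips out nested intervals using an orthogonal range-emptiness data structure, then runs the mirror-image greedy (process intervals from the largest right endpoint downward and stab at the \emph{left} endpoint). Your version is more elementary since it dispenses with the preprocessing and the auxiliary data structure entirely. One point worth noting: the paper's proof explicitly records, as a byproduct, a set $Q$ of pairwise non-overlapping intervals with $|Q|=|H|$, and this byproduct is invoked in the proofs of Lemmas~\ref{kob11111111} and~\ref{kkfkfkdkdksks}. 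Your algorithm produces the same witness implicitly---the triggering intervals $I_{j_1},\ldots,I_{j_k}$ are pairwise disjoint because $a_{j_{t+1}}>p_t=b_{j_t}$---so nothing is lost, but you should state this explicitly if your argument is to replace the paper's.
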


\smallskip

\hrule
\smallskip
\noindent {\sc Subspace Weak Epsilon Net Finder }
\smallskip
\hrule
\smallskip

\noindent {\bf Input:} an object $I\in {\cal{N}}$ and a set ${\cal{N}}_I(\Delta);$

\noindent {\bf Output:} a hitting set $C_I\subset\mathbb{R}^2$ for ${\cal{N}}_I(\Delta).$
\smallskip

\hrule

\begin{enumerate}

\item set $\{v_1,v_2\}=l(e(I))\cap{\rm{bd}}\,I$ and
$${\cal{P}}:={\cal{N}}_I(\Delta)\backslash\{N\in{\cal{N}}:N\cap\{v_1,v_2\}\neq\varnothing\};$$

\item form sets $Z_i=\{z_{e(I)}(e):e\in E({\cal{P}}),\,z_{e(I)}(e)\subset h_i(e(I))\},\,i=1,2;$

\item form a set $P_i$ of orthogonal projections of segments from $Z_i$
onto the straight line $l_i(e(I))$ and construct sets $P_i(r)=\{N_{ir}(p):p\in P_i\},\,i=1,2;$

\item find the minimum cardinality hitting set
$H_i\subset l_i(e(I))$ for $P_i(r),\,i=1,2,$ as in the proof of the lemma \ref{gjfkdkksaadqqrsra};

\item for each $x_0\in H_i$ and $i=1,2$ construct a set $S(x_0)$ of 4 points such that
$N_{\sqrt{2}r}(x_0)\subset\bigcup\limits_{x\in S(x_0)}N_r(x)$ and return
a set $C_I=\{v_1,v_2\}\cup\bigcup\limits_{x_0\in H_i,i=1,2}S(x_0).$

\end{enumerate}
\hrule
\smallskip

The following lemma summarizes on the procedure performance.
\begin{lemma}\label{kob11111111}
Let $m=|{\cal{N}}_I(\Delta)|.$
The {\sc Subspace Weak Epsilon Net Finder} procedure returns
a hitting set $C_I$ for ${\cal{N}}_I(\Delta)$
of size at most $\frac{8}{\Delta}+2$ in $O(m\log m)$ time
and $O(m)$ space.
\end{lemma}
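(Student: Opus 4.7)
The plan is to verify three claims: that $H$ hits every object in ${\cal{N}}_I(\delta),$ that $|H|\leq 8/\delta+2,$ and that the procedure runs in $O(m\log m)$ time and $O(m)$ space. The underlying idea is to use Observations 1 and 2 to reduce the planar hitting problem for ${\cal{P}}={\cal{N}}_I(\delta)\backslash\{N:N\cap\{v_1,v_2\}\neq\varnothing\}$ to two independent one-dimensional hitting problems on the lines $l_1(e(I))$ and $l_2(e(I)),$ and then to lift each one-dimensional hitting point on $l_i(e(I))$ back to four planar points via a standard cover of a radius $\sqrt{2}\,r$ disk by four radius $r$ disks (e.g. centers at $(\pm r/\sqrt{2},\pm r/\sqrt{2})$ relative to $x_0$).

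For correctness, the two tip points $v_1,v_2$ handle precisely those objects whose $r$-neighbourhoods meet them. For any remaining $N\in{\cal{P}},$ Observation 2 yields some $i_0\in\{1,2\}$ with $d(x,l_{i_0}(e(I)))\leq r$ for every $x\in z_{e(I)}(e(N));$ since the strip of width $2r$ around $l_{i_0}(e(I))$ lies in the closed halfplane $\bar{h}_{i_0}(e(I)),$ the segment $z_{e(I)}(e(N))$ belongs (up to a boundary case handled by general position) to $Z_{i_0},$ and its projection $p$ onto $l_{i_0}(e(I))$ lies in $P_{i_0}.$ By construction $H_{i_0}$ hits $N_{i_0 r}(p),$ so some $x_0\in H_{i_0}$ satisfies $d(x_0,p)\leq r.$ Placing coordinates with $l_{i_0}(e(I))$ on the line $y=r$ and $h_{i_0}(e(I))=\{y>0\},$ every point of $z_{e(I)}(e(N))$ has $y$-coordinate in $[0,2r];$ choosing $y^{\ast}\in z_{e(I)}(e(N))$ whose projection is within distance $r$ of $x_0$ and combining a horizontal error of at most $r$ with a vertical error of at most $r$ yields $d(x_0,y^{\ast})\leq\sqrt{2}\,r.$ Hence $y^{\ast}\in N_{\sqrt{2}\,r}(x_0)\subseteq\bigcup_{x\in S(x_0)}N_r(x),$ so some $x\in S(x_0)\subset H$ satisfies $d(x,e(N))\leq r,$ i.e.\ $x\in N.$

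For the size bound I invoke the classical fact (Gallai/Helly property of interval graphs) that for intervals on a line the minimum hitting set size equals the maximum number of pairwise disjoint intervals. One extracts from $P_i(r)$ a family $\{N_{ir}(p^{(k)})\}_{k=1}^{|H_i|}$ of pairwise disjoint intervals and lets $z^{(k)}\in Z_i$ project to $p^{(k)}.$ Orthogonal projection onto $l_i(e(I))$ is $1$-Lipschitz, so the projection of $N_r(z^{(k)})$ lies inside $N_{ir}(p^{(k)});$ disjointness of the $N_{ir}(p^{(k)})$ therefore lifts to disjointness of the planar sets $N_r(z^{(k)}).$ By Observation 1 one has $N^{(k)}\cap I=N_r(z^{(k)})\cap N_r(e(I))\subseteq N_r(z^{(k)}),$ so the subsets $N^{(k)}\cap I$ are also pairwise disjoint inside $I,$ whence
\begin{equation*}
|H_i|\,\delta\,w(I)\;<\;\sum_{k=1}^{|H_i|}w\bigl(N^{(k)}\cap I\bigr)\;\leq\;w(I).
\end{equation*}
This gives $|H_i|<1/\delta$ for each $i,$ so $|H|\leq 2+4(|H_1|+|H_2|)\leq 2+8/\delta.$ Steps 1--3 process $O(m)$ objects in linear time, step 4 runs in $O(m\log m)$ by Lemma \ref{gjfkdkksaadqqrsra}, and step 5 emits $4(|H_1|+|H_2|)+2=O(m)$ points; auxiliary storage stays $O(m).$

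The hard part will be packaging the projection-contraction step cleanly with the weight-disjointness chain so that the strict inequality $|H_i|<1/\delta$ is not lost, and dispatching the boundary subtleties (segment $z_{e(I)}(e(N))$ touching $l(e(I)),$ or the open/closed halfplane distinction in step 2) by a general-position perturbation that leaves the bounds intact.
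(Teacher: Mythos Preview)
Your proof is correct and follows essentially the same route as the paper's: both split off the tip points $v_1,v_2$, reduce the remainder to two one-dimensional interval hitting problems via Observations~1 and~2, extract $|H_i|$ pairwise disjoint intervals (you invoke the Gallai/Helly duality for intervals, the paper refers directly to the byproduct $Q_i$ of the greedy algorithm in Lemma~\ref{gjfkdkksaadqqrsra}), lift their disjointness to the capsules $N_r(z^{(k)})$ via projection, and combine with $w(N^{(k)}\cap I)>\delta\,w(I)$ to conclude $|H_i|\leq 1/\delta$. Your correctness argument via the $\sqrt{2}\,r$ coordinate estimate and the running-time analysis also match the paper's, so there is nothing to add.
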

\begin{proof}
All steps except for step 4 of the procedure require $O(m)$ time whereas step 4 takes $O(m\log m)$ time
according to the lemma \ref{gjfkdkksaadqqrsra}. It remains to get the bound $|C_I|\leq \frac{8}{\Delta}+2$
and prove that $C_I$ is a hitting set for ${\cal{N}}_I(\Delta).$
Indeed, due to the step 1 and the observation 2 one has either $z_{e(I)}(e)\subset h_1(e(I))$
or $z_{e(I)}(e)\subset h_2(e(I))$ for every $e\in E({\cal{P}}).$
Moreover, each interval $J\in P_i(r)$ is an orthogonal projection of some object
$P^{-1}_i(J)\in {\cal{N}}_r(Z_i).$ According to the proof of the lemma \ref{gjfkdkksaadqqrsra}, for each
$i=1,2$ at the step 4
a maximal subset $Q_i\subseteq P_i(r)$ is built
of pairwise non-overlapping intervals with $|Q_i|=|H_i|.$ Thus, the respective set
$\{P^{-1}_i(J):J\in Q_i\}$ consists of non-intersecting objects. By the observation 1
one gets that $w_I(P^{-1}_i(J)\cap I)>\Delta w_I(I)$ for all $J\in Q_i.$ Therefore
$|Q_i|\leq \frac{1}{\Delta}$ and $|C_I|=4|Q_1|+4|Q_2|+2\leq\frac{8}{\Delta}+2.$

By the observation 2 each point of a segment from $Z_i$ is within the distance $r$ from $l_i(e(I)).$
Therefore each segment of $Z_i$ is within $\sqrt{2}r$ distance from some point of $H_i.$
By construction at the step 5 one gets that $C_I$ is a hitting set for ${\cal{N}}_I(\Delta).$
\end{proof}

\subsubsection{Case where edges of $G$ are far apart}\label{hkfksiwiaoaofohdov}

Below a special case is considered in which
either $d(e,e')>r$ or $d(e,e')=0$ for any distinct $e,e'\in E.$
For this case
the following idea can be used to implement a subspace epsilon net finder
with smaller $c_1$ and $c_2.$
The idea consists in exploiting the fact that a small constant sized point set $U(I)\subset\mathbb{R}^2$ can be computed in $O(1)$ time for which the set
${\cal{P}}=\{N\in {\cal{N}}_I(\Delta):N\cap U(I)=\varnothing\}$ can be transformed into the set
${\cal{J}}_I(\Delta)=\{{\rm{bd}}\,I\cap N: N\in {\cal{P}}\}$ of 1-dimensional arcs, where the following property $(\ast)$
holds for ${\cal{J}}_I(\Delta):$
if ${\cal{M}}\subseteq{\cal{P}}$ is such that
$I\cap\bigcap\limits_{N\in {\cal{M}}}N\neq\varnothing,$
then ${\rm{bd}}\,I\cap \bigcap\limits_{N\in {\cal{M}}}N\neq\varnothing.$
More precisely, the idea suggests to exclude from ${\cal{N}}_I(\Delta)$ those objects which are hit by $U(I)$ and reduce the problem
of computing a small hitting set for the remaining objects to the equivalent much
simpler problem of finding a hitting set for the corresponding set of one-dimensional arcs.

For an object $I\in{\cal{N}}$
let $C(I)$ be the set of 4 endpoints of segments $f_i(e(I))$
and $U(I)=C(I)\cup (l(e(I))\cap {\rm{bd}}\,I),$ where $i=1,2$ and $|U(I)|=6.$
As a start, a simple observation can be made
about shape of $r$-hippodromes
of non-zero length segments.
\medskip

\begin{lemma}\label{ahhhahshsfafadsdwrwr}
Let $I,N_1,N_2\in{\cal{N}}$ be distinct
and $d(e(I),e(N_i))\in (r,2r],\,i=1,2.$
If $I\cap N_1\cap N_2\neq\varnothing,$ then either
$N_1\cap N_2\cap {\rm{bd}}\,I\neq\varnothing$ or $N_{i_0}\cap U(I)\neq\varnothing$
for some $i_0\in\{1,2\}.$
\end{lemma}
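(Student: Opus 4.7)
The plan is to begin by applying observation 2 to each pair $(e(I), e(N_i))$ for $i = 1, 2$. This is legitimate because the hypothesis $I \cap N_1 \cap N_2 \neq \varnothing$ implies $N_r(e(I)) \cap N_r(e(N_i)) \neq \varnothing$ for both $i$. Observation 2 then yields, for each $i$, a dichotomy: either $N_i \cap \{v_1, v_2\} \neq \varnothing$, and since $\{v_1, v_2\} \subset U(I)$ by the definition of $U(I)$ the conclusion $N_{i_0} \cap U(I) \neq \varnothing$ follows immediately with $i_0 = i$, or there exists $j_i \in \{1, 2\}$ such that $d(x, l_{j_i}(e(I))) \leq r$ for every $x \in z_{e(I)}(e(N_i))$. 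Thus only the case where the second alternative holds for both $i$ requires further argument.

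In this remaining case I would argue by contradiction: add the assumptions $N_1 \cap N_2 \cap {\rm{bd}}\,I = \varnothing$ and $N_i \cap U(I) = \varnothing$ for $i = 1, 2$, and deduce $I \cap N_1 \cap N_2 = \varnothing$, contradicting the hypothesis. The setup exploits $d(e(I), e(N_i)) > r$, which gives $e(N_i) \cap I = \varnothing$ while $N_i \cap I \neq \varnothing$: by the pseudo-disk property recalled in the introduction one then has $|{\rm{bd}}\,N_i \cap {\rm{bd}}\,I| = 2$, so the arc $\alpha_i := N_i \cap {\rm{bd}}\,I$ is connected, and the hypothesis $N_i \cap U(I) = \varnothing$ confines $\alpha_i$ to exactly one of the six arcs of ${\rm{bd}}\,I \setminus U(I)$ (two straight sides $f_1(e(I)), f_2(e(I))$ and four quarter-arcs of the semicircular caps). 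Convexity of $N_1 \cap N_2$ combined with $N_1 \cap N_2 \cap {\rm{bd}}\,I = \varnothing$ force $N_1 \cap N_2 \subset {\rm{int}}\,I$, placing both intersection points of ${\rm{bd}}\,N_1 \cap {\rm{bd}}\,N_2$ in ${\rm{int}}\,I$.

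The heart of the argument is then a geometric case analysis on $(j_1, j_2)$, using observation 1 to rewrite $A_i := I \cap N_i = I \cap N_r(z_{e(I)}(e(N_i)))$, with $z_i := z_{e(I)}(e(N_i))$ lying outside $I$ in the strip around $l_{j_i}(e(I))$. Working in coordinates where $e(I)$ sits on the $x$-axis, the opposite-sides case $j_1 \neq j_2$ puts $z_1$ in the upper strip and $z_2$ in the lower strip; the exclusion of the tips from $N_i$ together with the separation $d(e(I), e(N_i)) > r$ will show that $A_1 \subset I \cap \{y \geq 0\}$ and $A_2 \subset I \cap \{y \leq 0\}$, leaving only $l(e(I)) \cap I$ as a possible meeting locus, and a direct check using the strip constraint rules this out. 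In the same-side case $j_1 = j_2$, one enumerates sub-configurations of which of the three arcs of ${\rm{bd}}\,I \setminus U(I)$ on the chosen side each of $\alpha_1, \alpha_2$ occupies, and uses the strip constraint plus the disjointness of $\alpha_1, \alpha_2$ to verify $A_1 \cap A_2 = \varnothing$ in every sub-case.

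The main obstacle will be exactly this geometric case analysis, particularly the same-side case with its several sub-configurations. The key geometric insight I expect to reuse in every sub-case is that when $\alpha_i$ avoids $U(I)$ and $z_i$ lies in the width-$r$ strip about $l_{j_i}(e(I))$, the inward reach of the lens $A_i$ into $I$ is tightly controlled; the only ways to make $A_1, A_2$ meet inside $I$ are either to let $\alpha_1, \alpha_2$ overlap on ${\rm{bd}}\,I$, which is forbidden by $N_1 \cap N_2 \cap {\rm{bd}}\,I = \varnothing$, or to stretch some $A_i$ far enough for $N_i$ to absorb a corner of $U(I)$, which is forbidden by $N_i \cap U(I) = \varnothing$.
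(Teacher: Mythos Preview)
Your plan is viable but takes a substantially more intricate route than the paper. The paper never invokes observation~2 and performs no case split on sides or on which sub-arc of ${\rm bd}\,I\setminus U(I)$ hosts $\alpha_i$. Instead it proves a single \emph{radial monotonicity} property: assuming $N_i\cap U(I)=\varnothing$ for $i=1,2$, for every $x\in{\rm bd}\,I$ the nonempty set $[p(x),x]\cap N_i$ is a segment with one endpoint at $x$, where $p(x)$ is the Euclidean projection of $x$ onto $e(I)$. Once this is known the lemma is immediate: given $x\in I\cap N_1\cap N_2$, the ray from $p(x)$ through $x$ meets ${\rm bd}\,I$ at a point that still lies in $N_1\cap N_2$. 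The monotonicity itself is a short contradiction: if $(p(x_0),x_0)$ crossed $\pi_{i_0}={\rm bd}\,N_{i_0}\cap I$ twice, one would find $x'\in[p(x_0),x_0]$ and an endpoint $x''$ of $e(N_{i_0})$ with $x'-x''$ orthogonal to $x_0-p(x_0)$ and $\|x'-x''\|\le r$; the hypothesis $r<d(e(I),e(N_{i_0}))\le 2r$ then forces $x''\in\bigcup_{u\in U(I)}N_r(u)$, contradicting $N_{i_0}\cap U(I)=\varnothing$.

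Your scheme, by contrast, separates the opposite-side configuration $j_1\neq j_2$ from the same-side one and then refines further by the arc carrying each $\alpha_i$. The opposite-side claim $A_1\subset I\cap\{y\ge 0\}$ is in fact true, but it is not free: for a point $p=(a,b)\in z_1$ with $0\le b<r$ and $a>\Delta$ one has to compute that the lens $N_r(p)\cap N_r((\Delta,0))$ dips below $y=0$ exactly when $(a-\Delta-r)^2+b^2<r^2$, i.e.\ exactly when $N_r(p)$ swallows the tip $(\Delta+r,0)\in U(I)$. This is precisely the mechanism the paper captures once in its monotonicity step, and your same-side sub-cases would end up re-deriving further instances of it. So what you propose should close, but it re-proves the paper's one idea several times over; the radial-projection argument buys a case-free proof at the cost of that single geometric observation about endpoints of $e(N_{i_0})$.
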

\begin{proof}
Let $\chi_i={\rm{bd}}\,I\cap N_i$ and
$\pi_i={\rm{bd}}\,N_i\cap I$ for $i=1,2.$
Assume that $N_i\cap U(I)=\varnothing$ for all $i\in\{1,2\}.$ It should be
proved that $\chi_1\cap \chi_2\neq\varnothing$ if $I\cap N_1\cap N_2\neq\varnothing.$
Let $p(x)\in e(I)$ be Euclidean projection of $x$ onto $e(I)$
for $x\in\mathbb{R}^2.$ It is sufficient to establish
the following monotonicity property: for any $x\in {\rm{bd}}\,I$ and $i=1,2$
nonempty intersection $[p(x),x]\cap N_i$ is a (possibly zero-length) segment with its endpoint in $x.$
Indeed, for $x\in I\cap N_1\cap N_2$ this implies that the ray
with the origin $p(x)$ and direction $x-p(x)$ intersects ${\rm{bd}}\,I$
at some point of $\chi_1\cap \chi_2.$

Suppose, in contrary, there is a point $x_0\in {\rm{bd}}\,I$
and $i_0\in\{1,2\}$ such that
the interval $(p(x_0),x_0)$ has two (possibly identical) points $x'_1$ and $x'_2$
of intersection with $\pi_{i_0}.$ There is a point $x'\in [p(x_0),x_0]$
and a endpoint $x''\in e(N_{i_0})$ with $(x'-x'',x_0-p(x_0))=0,$
\footnote{$(\cdot,\cdot)$ denotes Euclidean scalar product in $\mathbb{R}^2.$}
such that $d(x',x'')\leq r.$
It implies the inclusion $x''\in \bigcup\limits_{x\in U(I)}N_r(x)$ taking
into account that $r<d(e(I),e(N_{i_0}))\leq 2r.$ But this inclusion is impossible
by our assumption that $N_{i_0}\cap U(I)=\varnothing.$
\end{proof}

By construction of the set ${\cal{P}}$ there is a point in ${\rm{bd}}\,I,$ which does not belong to $\bigcup\limits_{J\in{\cal{J}}_I(\Delta)}J.$
Therefore the property $(\ast)$ holds true  for ${\cal{P}}$ and ${\cal{J}}_I(\Delta)$ by Helly theorem
and the lemma \ref{ahhhahshsfafadsdwrwr}. Thus, the problem of finding the smallest cardinality
hitting set for ${\cal{P}}$ is equivalent to the problem of finding the least cardinality hitting set
for ${\cal{J}}_I(\Delta).$ The latter problem can be equivalently reduced to the problem of computing the smallest size hitting set for a set of one-dimensional intervals on the real line, using polar coordinates.

Our subspace epsilon net finder is given below. It amounts to constructing
a minimum cardinality hitting
set for the set of ``1-dimensional'' intervals ${\cal{J}}_I(\Delta).$

\smallskip

\hrule
\smallskip
\noindent {\sc Subspace Weak Epsilon Net Finder$^{\ast}$}
\smallskip
\hrule
\smallskip

\noindent {\bf Input:} an object $I\in {\cal{N}}$ and a set ${\cal{N}}_I(\Delta);$

\noindent {\bf Output:} a hitting set $C_I\subset\mathbb{R}^2$ for ${\cal{N}}_I(\Delta).$
\smallskip

\hrule

\begin{enumerate}

\item compute $U(I)$ as described before the lemma \ref{ahhhahshsfafadsdwrwr};

\item set ${\cal{P}}:=\{N\in{\cal{N}}_I(\Delta):N\cap U(I)=\varnothing\}$
and ${\cal{J}}_I(\Delta):=\{{\rm{bd}}\,I\cap N:N\in {\cal{P}}\};$

\item applying polar coordinates, find the minimum cardinality hitting set
$C'_I$ for ${\cal{J}}_I(\Delta)$ as in the proof of the lemma \ref{gjfkdkksaadqqrsra};

\item return $C_I=C'_I\cup U(I).$

\end{enumerate}
\hrule
\smallskip

Based on the lemma \ref{ahhhahshsfafadsdwrwr}
performance analysis is given below of
the {\sc Subspace Weak Epsilon Net Finder$^{\ast}$} procedure
under a weaker assumption on the set $E.$

\begin{lemma}\label{kkfkfkdkdksks}
Let $m=|{\cal{N}}_I(\Delta)|$ and ${\cal{P}}={\cal{N}}_I(\Delta)\backslash\{N\in{\cal{N}}:N\cap U(I)\neq\varnothing\}.$
If $E({\cal{P}})$ consists of segments at the distance more than $r$ from $e(I),$
then the {\sc Subspace Weak Epsilon Net Finder$^{\ast}$} procedure returns
a hitting set $C_I$ for ${\cal{N}}_I(\Delta)$
of size at most $\frac{1}{\Delta}+6$ in $O(m\log m)$ time
and $O(m)$ space.
\end{lemma}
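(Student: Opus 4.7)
The plan is to verify three properties of the procedure's output: (a) $H$ is a hitting set for $\mathcal{N}_I(\delta)$; (b) $|H|\leq 1/\delta+6$; (c) the procedure runs in $O(m\log m)$ time and $O(m)$ space.

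Complexity (c) is the easiest: steps 1 and 3 run in $O(m)$ time and space by inspection. For step 2, I would parametrize ${\rm{bd}}\,I$ by a polar (angular) coordinate centered, say, at the midpoint of $e(I)$; cutting the resulting closed curve at a point of $U(I)\cap{\rm{bd}}\,I$ identifies ${\rm{bd}}\,I$ with a real interval and, provided each arc $N\cap{\rm{bd}}\,I$ is connected, turns $\mathcal{J}$ into a family of at most $m$ one-dimensional intervals, to which Lemma \ref{gjfkdkksaadqqrsra} applies in $O(m\log m)$ time and $O(m)$ space.

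The one geometric fact underlying both (a) and the reduction in (c) is that each $N\cap{\rm{bd}}\,I$ with $N\in\mathcal{P}$ is a nonempty connected arc. Indeed, Assumption 2 forces $d(e(N),e(I))>r$, so every point of $e(N)$ lies outside $I$ and in particular $N\not\subseteq I$; meanwhile $w(N\cap I)>\delta w(I)>0$ gives $N\cap I\neq\varnothing$; since the stadia $N$ and $I$ are convex, the boundary of the nonempty convex set $N\cap I$ splits naturally into the two connected arcs ${\rm{bd}}\,N\cap I$ and $N\cap{\rm{bd}}\,I$, the latter being nonempty because $N\not\subseteq I$. This yields (a) at once: objects in $\mathcal{N}_I(\delta)\setminus\mathcal{P}$ are hit by $U(I)\subseteq H$ by the definition of $\mathcal{P}$, and every $N\in\mathcal{P}$ is hit by $H'$ because $H'$ hits the nonempty element $N\cap{\rm{bd}}\,I\in\mathcal{J}$.

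The size bound (b) is the main point. By the optimality guarantee of Lemma \ref{gjfkdkksaadqqrsra}, $|H'|$ equals the maximum number of pairwise disjoint arcs in $\mathcal{J}$. Suppose $\chi_1,\dots,\chi_k$ are pairwise disjoint, with $\chi_i=N_i\cap{\rm{bd}}\,I$ and $N_i\in\mathcal{P}$. Since $N_i\cap I\neq\varnothing$ forces $d(e(I),e(N_i))\leq 2r$ and Assumption 2 gives the strict lower bound $>r$, and since $N_i\cap U(I)=\varnothing$, Lemma \ref{ahhhahshsfafadsdwrwr} applies to every pair $(N_i,N_j)$; its contrapositive yields $I\cap N_i\cap N_j=\varnothing$ whenever $\chi_i\cap\chi_j=\varnothing$. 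Therefore the sets $N_1\cap I,\dots,N_k\cap I$ are pairwise disjoint subsets of $I$, so
\[
k\,\delta\,w(I)<\sum_{i=1}^{k}w(N_i\cap I)=w\!\left(\bigcup_{i=1}^{k}(N_i\cap I)\right)\leq w(I),
\]
giving $k<1/\delta$. Combined with $|U(I)|=6$, this produces $|H|=|H'|+|U(I)|\leq 1/\delta+6$. The only real obstacle I anticipate is the geometric connectivity claim for $N\cap{\rm{bd}}\,I$; once it is established, the reduction to one dimension and the packing argument via Lemma \ref{ahhhahshsfafadsdwrwr} are essentially mechanical.
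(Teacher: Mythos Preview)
Your proposal is correct and follows essentially the same route as the paper: reduce to one-dimensional interval hitting on $\mathrm{bd}\,I$ via Lemma~\ref{gjfkdkksaadqqrsra}, then use Lemma~\ref{ahhhahshsfafadsdwrwr} to convert a family of pairwise disjoint arcs of size $|H'|$ into a family of objects pairwise disjoint within $I$, giving $|H'|\leq 1/\delta$ by the weight-packing inequality (the paper phrases this via the explicit byproduct set $Q$ from the proof of Lemma~\ref{gjfkdkksaadqqrsra} rather than the max-antichain characterization, but the content is identical). The one place to tighten is your justification that $N\cap\mathrm{bd}\,I$ is a single arc---convexity alone does not force two convex boundaries to meet in at most two points, but the pseudo-disk property $|\mathrm{bd}\,N\cap\mathrm{bd}\,I|\leq 2$ established in the paper's introduction does, and the paper's own proof simply takes this connectedness for granted.
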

\begin{proof}
The set $C_I$ gives a hitting
set for ${\cal{N}}_I(\Delta)$ as $C'_I$ is a hitting set for ${\cal{J}}_I(\Delta)$
by construction reported in the proof of the lemma \ref{gjfkdkksaadqqrsra}.
Thus, it remains to estimate $|C'_I|.$
As byproduct of this construction one gets a maximal
subset ${\cal{J}}'$ of non-overlapping arcs from ${\cal{J}}_I(\Delta)$ with $|C'_I|=|{\cal{J}}'|.$
Let ${\cal{P}}'\subseteq{\cal{P}}$ be the subset such that
${\cal{J}}'=\{{\rm{bd}}\,I\cap N:N\in{\cal{P}}'\}.$
Due to the lemma \ref{ahhhahshsfafadsdwrwr} ${\cal{P}}'$
consists of non-overlapping objects within $I.$
Therefore $|C'_I|\leq\frac{1}{\Delta}.$
\end{proof}

\subsubsection{Case of Delaunay triangulation $G$}

Subspace epsilon net finders are constructed below under the assumption that
$E$ is a special configuration of straight line segments.
It is produced from a planar finite point set $V$ using the so called empty disk property:
two points $u,v\in V$ are joined by a straight line segment when a disk exists which
contains $u$ and $v$ on its boundary and does not contain any other points from $V.$
This property makes the other segments of $E$ avoid having their endpoints in that disk.
It defines a special class of plane graphs called Delaunay triangulations.

\begin{definition}
Assuming that no $4$ points of $V$ are cocircular
a plane graph $G=(V,E)$ is called a {\it Delaunay
triangulation} when $[u,v]\in E$
iff there is a disk $D$ such that
$u,v\in {\rm{bd}}\,D$ and $V\cap {\rm{int}}\,D=\varnothing.$
Such a disk $D$ is called an {\it empty disk} for $[u,v].$
\end{definition}

Let $E$ be an edge set of an arbitrary subgraph of a Delaunay triangulation $G.$
A subspace epsilon net finder is constructed below with small $c_1$ and $c_2.$
Let ${\cal{N}}\subseteq{\cal{N}}_r(E),$ $I\in{\cal{N}}$
and ${\cal{N}}_I(\Delta)$ be given as its input.
By definition of Delaunay triangulation there is a disk $D(e(I))$
such that both endpoints of $e(I)$ lie on its boundary and
none of segments from $E$
has its endpoints in the interior of $D(e(I)).$ Let $c(e(I))$ be its center.

Each segment $e\in E({\cal{N}}_I(\Delta))$ must intersect $N_{2r}(e(I)).$
In the lemma 6 of \cite{kobylkin_dryakhlova} it is shown that for $c(e(I))\in l(e(I))$
an at most $14$-point set $U_0(I)$ can be computed in $O(1)$ time such that
$N\cap U_0(I)\neq\varnothing$ for every $N\in{\cal{N}}_I(\Delta).$
Therefore if $c(e(I))$ is, say, in a halfplane $h_{i_0}(e(I))$ for $i_0\in\{1,2\},$ there must
be a constant sized point set, which hits all objects from ${\cal{N}}_{i_0}(e(I))=\{N\in{\cal{N}}_I(\Delta):e(N)\cap {\rm{cl}}\,h_{i_0}(e(I))\cap N_{2r}(e(I))\neq\varnothing\}.$ More precisely, it can be proved that

\begin{lemma}\label{jh_}
An at most $8$-point set $U(I)$ can be found in $O(1)$ time such that
$N\cap U(I)\neq\varnothing$ for every $N\in{\cal{N}}_{i_0}(e(I)).$
\end{lemma}

Before giving the lemma proof a procedure is described first
which computes the subset $U(I).$ Its steps are as follows:
\medskip

\hrule
\smallskip
\noindent {\sc Constant Hitting Set Finder.}
\smallskip
\hrule
\smallskip

\noindent {\bf Input:} a constant $r>0$ and an edge $e=[u_1,u_2]$ of
a Delaunay triangulation $G=(V,E);$

\noindent {\bf Output:} an at most $8$-point hitting set $U(e)\subset\mathbb{R}^2$
for ${\cal{N}}_{i_0}(e)=\{N\in{\cal{N}}_r(E):e(N)\cap {\rm{cl}}\,h_{i_0}(e)\cap N_{2r}(e)\neq\varnothing\},$ where $c(e)$ is
a center of an empty disk $D(e)$ for $e$ and $c(e)\in {\rm{cl}}\,h_{i_0}(e)$ for some $i_0\in\{1,2\}.$
\smallskip

\hrule
\begin{enumerate}

\item for each $s\in\{1,2\}$ construct a regular hexagon inscribed in $N_{2r}(u_s),$
whose orientation is such that the straight line through $e$ contains a pair of vertices of
that hexagon; form a 7-point set $V_s,$ which contains $u_s$ and midpoints
of the hexagon sides (of length $2r)$;

\item let $v_{s1}$ and $v_{s2}$ be points from $V_s,$ which are symmetric with respect to
$u_s$ and $[v_{s1},v_{s1}]\perp e,\,s=1,2;$

\item for each $s\in\{1,2\}$ choose a subset $U_s\subset V_s$ such that $|U_s|=3$ and
$T_s\cap h_{i_0}(e)\subset\bigcup\limits_{u\in U_s}N_r(u),$ where $N_{2r}(e)=T_1\cup T_2\cup R$
for some rectangle $R$ and two closed halfdisks $T_1$ and $T_2$ of radii $2r$
centered at $u_1$ and $u_2$ respectively; set $u_{si_0}:=g_{i_0}(e)\cap {\rm{bd}}\,N_r(v_{si_0}),$
where $g_{i_0}(e)\subset h_{i_0}(e)\cap{\rm{bd}}\,N_{2r}(e)$ is a straight line segment, touching both
$T_1$ and $T_2;$

\item let $\Delta(e)=\sqrt{d(u_1,u_2)^2/4+d(c(e),e)^2};$

\item if either $\Delta(e)\geq\frac{(2\sqrt{3}-1)r}{\sqrt{4\sqrt{3}-6}}$ or $\Delta(e)\in(0,r/2],$
return $U(e):=U_1\cup U_2;$

\item for $\Delta(e)\in \left[2r,\frac{(2\sqrt{3}-1)r}{\sqrt{4\sqrt{3}-6}}\right)$
set $u_0:=\frac{u_1+u_2}{2}$ and construct two points $z_{1i_0}$ and $z_{2i_0}$
at the intersection $g_{i_0}(e)\cap {\rm{bd}}\,N_{\Delta(e)}(u_0),$ where
$z_{si_0}$ is closer to $u_{si_0}$ than the point $z_{(3-s)i_0}$ is;
set $a_{si_0}=\frac{u_{si_0}+z_{si_0}}{2},\,s=1,2;$

\item if $\Delta(e)\in \left(r,2r\right),$
consider a (rectangular) coordinate system with the origin at
$u_s$ whose $x$-axis is along $e$ and $y$-axis
is perpendicular to $x$-axis, being directed towards $g_{i_0}(e);$
set $b_{i_0}=(d(u_1,u_2)/2,\Delta(e))$ and $a_{si_0}=\frac{u_{si_0}+b_{i_0}}{2},\,s=1,2;$

\item for $\Delta(e)\in (r/2,r]$ set $a_{1i_0}:=(d(u_1,u_2)/2,\sqrt{3}r),$
$a_{2i_0}:=\left(d(u_1,u_2)/2,\frac{\sqrt{3}r}{2}\right);$

\item return
$U(I):=U_1\cup U_2\cup\{a_{1i_0},a_{2i_0}\}.$

\end{enumerate}
\hrule
\medskip

\begin{proof}
Let $U(I)$ be a set, produced by the {\sc Constant Hitting Set Finder}
procedure for $e=e(I).$ Let $O_I=N_{\Delta(e(I))}(u_0)\cap h_{i_0}(e(I))\cap R.$
Obviously, $O_I\subset D(e(I)).$
Using planarity of Delaunay triangulations and applying the same argument as in the proof of the lemma 6
from \cite{kobylkin_dryakhlova} it can be proved that $U(I)$
hits all objects from ${\cal{N}}_{i_0}(e(I)).$
\end{proof}

\begin{lemma}\label{jh}
A hitting set $C'_I$ of size at most $\frac{4}{\Delta}$
can be constructed for ${\cal{M}}_I(\Delta)=\{N\in{\cal{N}}_I(\Delta):N\cap U(I)=\varnothing\},$
using the {\sc Subspace Weak Epsilon Net Finder} procedure, applied for
$I$ and ${\cal{M}}_I(\Delta).$
\end{lemma}
\begin{proof}
As $z_{e(I)}(e(N))\subset h_{3-i_0}(e(I))\cap N_{2r}(e(I))$
for every $N\in{\cal{M}}_I(\Delta),$ the set $C'_I$ is a hitting set
for ${\cal{M}}_I(\Delta)$ of size at most $\frac{4}{\Delta}$ due to
the proof of the lemma \ref{kob11111111}.
\end{proof}

Finally, $C_I=U(I)\cup C'_I$ gives a hitting set for ${\cal{N}}_I(\Delta)$
of size at most $\frac{4}{\Delta}+8.$
The lemma below summarizes on the performance of the described subspace epsilon net finder.
It follows from lemmas \ref{jh_} and \ref{jh}.

\begin{lemma}\label{fsjdkgowoqpspa}
Let $m=|{\cal{N}}_I(\Delta)|.$
A hitting set $C_I$ for ${\cal{N}}_I(\Delta)$ can be built within $O(m\log m)$ time
and $O(m)$ space of size at most $\frac{4}{\Delta}+8.$
\end{lemma}

\subsection{Estimating the parameter $\theta_0$ of the Weak Epsilon Net Finder procedure}\label{gkfkkskslf_glro}

\subsubsection{General approach to estimate $\theta_0$}

Below, given a class ${\cal{E}}\subseteq{\cal{E}}_0$
of sets of straight line segments on the plane,
a general approach is described below to adjust the parameter $\theta_0$
of the {\sc Weak Epsilon Net Finder} procedure. Adjusting $\theta_0$ allows to guarantee
an upper bound $|C_{\theta_0}|\leq\frac{M}{\varepsilon}$ to hold uniformly for all $E\in{\cal{E}},$
where $C_{\theta_0}$ is a weak $\varepsilon$-net, output by the procedure,
and $M$ is some absolute constant, defining the procedure performance parameter for the class ${\cal{E}}.$
Moreover, tuning $\theta_0$ is aimed at minimizing $M.$

More precisely, the approach suggests to identify
a special mapping, defined for any $E\in {\cal{E}}$ and any ${\cal{I}}\subseteq{\cal{N}}_r(E).$
Existence of this special mapping guarantees the bound $|C_{\theta^{\ast}_0}|\leq\frac{M^{\ast}}{\varepsilon}$
to hold uniformly within ${\cal{E}}$ for some $\theta^{\ast}_0=\theta^{\ast}_0({\cal{E}})$
and $M^{\ast}=M^{\ast}({\cal{E}}),$ where both
$M^{\ast}$ and $\theta^{\ast}_0$ depend on constant parameters of the identified
mapping and performance parameters $c_1$ and $c_2$ of a subspace
epsilon net finder, applied at the step 2 of the {\sc Weak Epsilon Net Finder} procedure.
The approach follows the paper \cite{pyrga}.
The lemma below describes how it works.

\begin{lemma}\label{dskskskkroroeowoqp}
Let ${\cal{E}}\subseteq{\cal{E}}_0$ be a class of sets of straight line segments on the plane.
Assume that:
\begin{enumerate}
\item there are absolute constants $\alpha,\beta,\tau$ and
a graph $G_{{\cal{I}}}=({\cal{I}},U)$ for every $E\in{\cal{E}}$ and ${\cal{I}}\subseteq {\cal{N}}_r(E)$
such that $|U|\leq\beta|{\cal{I}}|$ and
$m_{{\cal{I}}}(y)\geq\alpha n_{{\cal{I}}}(y)-\tau$ for every $y\in Y_0,$ where
$n_{{\cal{I}}}(y)=|\{I\in {\cal{I}}:y\in I\}|$ and
$m_{{\cal{I}}}(y)=|\{\{I,I'\}\in U:y\in I\cap I'\}|;$

\item the {\sc Weak Epsilon Net Finder} procedure is applied with a subspace epsilon net finder,
whose time complexity is $O(Q(m)),$ where $m$ is the number of objects in a
subset of ${\cal{N}}_r(E),$ defining an input of the subspace epsilon net finder,
and there is a constant $L>0$ such that $Q(m_1)+\ldots+Q(m_t)\leq L Q\left(\sum\limits_{k=1}^t m_k\right)$ for any positive integers $t, m_1,\ldots m_t.$
\end{enumerate}
Then, given $E\in{\cal{E}}$ and ${\cal{N}}\subseteq{\cal{N}}_r(E),$
a weak $\varepsilon$-net is constructed for $(Y_0,{\cal{N}},w)$ by the {\sc Weak Epsilon Net Finder} procedure
for any $0<\varepsilon<1$
of size at most $$\left[\left(1+\frac{1}{\sqrt{1+\frac{c_2\alpha}{c_1\beta}}}\right)\left(\frac{2c_1\tau\beta}{\alpha^2}+\frac{c_2\tau}{\alpha}\right)+\frac{c_2\tau}{\alpha \sqrt{1+\frac{c_2\alpha}{c_1\beta}}}\right]\frac{1}{\varepsilon}$$ in
$O\left(\frac{\tau n^3}{\alpha\varepsilon}+Q(n)\right)$ time and linear space with respect to the space used to store $(Y_0,{\cal{N}},w),$ where
\begin{equation}\label{hjhiislsldlsls}
\theta_0=\theta^{\ast}_0({\cal{E}})=\frac{\frac{\alpha}{\beta}}{1+\sqrt{1+\frac{c_2\alpha}{c_1\beta}}}
\end{equation}
and $n=|E|.$
\end{lemma}

\begin{proof}
First, time complexity of the procedure step 1 is to be estimated.
It can be implemented in the straightforward way as follows.
Initially, set ${\cal{P}}:={\cal{N}}_{\varepsilon}$ and ${\cal{I}}:=\varnothing.$
An arbitrary set $P\in {\cal{P}}$ is tried for adding to ${\cal{I}}$
by performing a sequence of checks to find out if there is an object $I\in {\cal{I}}$ with $w(P\cap I)>\delta w(Y_0).$
Each check is done by running through all points from $Y_0$ in $O(n^2)$ time.
If such $I$ exists, then choose the first encountered one, add $P$ into a set ${\cal{N}}_{\delta,I}$ which is initially assumed empty
and set ${\cal{P}}:={\cal{P}}\backslash\{P\}.$
Otherwise, add $P$ into ${\cal{I}}$ and set ${\cal{P}}:={\cal{P}}\backslash\{P\}.$
The process stops when ${\cal{P}}=\varnothing.$

Let $t_{\theta_0}=|{\cal{I}}|$ and $z_i$ be the number of sets from ${\cal{N}}_{\varepsilon}$
which are tried for inclusion into ${\cal{I}}$ when
$|{\cal{I}}|=i.$ Then, time complexity of the above straightforward implementation is of the order:
$$O\left(n^2\sum\limits_{i=1}^{t_{\theta_0}}z_ii\right)=O\left(n^2 t_{\theta_0}\sum\limits_{i=1}^{t_{\theta_0}}z_i\right)=O(n^3
t_{\theta_0}).$$ Its space cost
is obviously of the same order as the space cost required to store the range space $(Y_0,{\cal{N}},w).$

Second, complexity of the step 2 is to be estimated of the {\sc Weak Epsilon Net Finder} procedure.
Here
one has disjoint sets ${\cal{N}}_{{\delta},I}$ formed for each
$I\in{\cal{I}}.$
It requires $O(Q(|{\cal{N}}_{\varepsilon}|))$
time by our assumption on time complexity of the subspace epsilon net finder,
working at the procedure step 2.

Finally, the claimed upper bound is established for length of a weak epsilon net produced by the {\sc Weak Epsilon Net Finder} procedure.
Let ${\cal{I}}$ be a maximal $\delta$-independent set, where the parameter $\theta_0$ is to be
chosen later. Following the same argument as in the proof of the theorem 4 from \cite{pyrga},
one gets $$t_{\theta_0}\leq\frac{\sum\limits_{I\in {\cal{I}}}w(I)}{\varepsilon w(Y_0)}=
\frac{\sum\limits_{y\in Y_0\cap\bigcup\limits_{I\in {\cal{I}}}I} w(y)n_{{\cal{I}}}(y)}{\varepsilon w(Y_0)}\leq$$$$\leq \frac{\sum\limits_{y\in Y_0\cap\bigcup\limits_{I\in {\cal{I}}}I} w(y)(m_{{\cal{I}}}(y)+\tau)}{\varepsilon\alpha w(Y_0)}=$$
$$=\frac{\tau w\left(Y_0\cap\bigcup\limits_{I\in {\cal{I}}}I\right)+\sum\limits_{u\in U}w(Y_0(u))}{\varepsilon\alpha w(Y_0)}\leq\frac{\tau w(Y_0)+t_{\theta_0}\beta\delta w(Y_0)}{\varepsilon\alpha w(Y_0)}=\frac{t_{\theta_0}\beta\theta_0}{\alpha}+\frac{\tau}{\alpha\varepsilon}$$
where $Y_0(u)\subset Y_0$ contains all points which lie in the intersection of a pair of those objects from ${\cal{I}}$
which form an edge $u\in U.$ Thus, it gives upper bounds $t_{\theta_0}\leq\frac{\tau}{(\alpha-\theta_0\beta)\varepsilon}$ and
$\frac{\sum\limits_{I\in {\cal{I}}}w(I)}{\varepsilon w(Y_0)}\leq\frac{\tau}{(\alpha-\theta_0\beta)\varepsilon}.$
According to our assumptions, the subspace epsilon net finder at the procedure step 2
gives a hitting set of size at most $\frac{c_1w(I)}{\delta w(Y_0)}+c_2$
for ${\cal{N}}_{{\delta},I},\,I\in{\cal{I}}.$
Therefore, $C_{\theta_0}$ is a weak $\varepsilon$-net
of size at most $\left(\frac{c_1}{\theta_0}+c_2\right)\frac{\tau}{(\alpha-\theta_0\beta)\varepsilon}.$ Optimizing with respect to $\theta_0<\frac{\alpha}{\beta}$
 one obtains $\theta_0^{\ast}=\theta_0^{\ast}({\cal{E}})=\frac{\frac{\alpha}{\beta}}{1+\sqrt{1+\frac{c_2\alpha}{c_1\beta}}}$ and gets the claimed bound $$|C_{\theta_0^{\ast}}|\leq \left[\left(1+\frac{1}{\sqrt{1+\frac{c_2\alpha}{c_1\beta}}}\right)\left(\frac{2c_1\tau\beta}{\alpha^2}+\frac{c_2\tau}{\alpha}\right)+\frac{c_2\tau}{\alpha \sqrt{1+\frac{c_2\alpha}{c_1\beta}}}\right]\frac{1}{\varepsilon}.$$
\end{proof}

\begin{definition}
Given $r>0$ and a class ${\cal{E}}\subseteq{\cal{E}}_0$ of sets of straight line segments on the plane,
a map is called a {\it{structural map}} for ${\cal{E}}$ and $r,$ if it
assigns a graph $G_{\cal{I}}$ for each $E\in{\cal{E}}$ and ${\cal{I}}\subseteq{\cal{N}}_r(E)$
as defined in the lemma \ref{dskskskkroroeowoqp},
where the constants $\alpha=\alpha({\cal{E}},r),\beta=\beta({\cal{E}},r)$ and $\tau=\tau({\cal{E}},r)$ are referred to as {\it structural parameters}
for the class ${\cal{E}}$ and radius $r.$
\end{definition}

Thus, to estimate value of $\theta_0$ for a given class ${\cal{E}}$ of sets of straight line segments on the plane, the approach of the lemma \ref{dskskskkroroeowoqp} suggests to choose an appropriate subspace epsilon net finder to use at the step 2 of the
{\sc Weak Epsilon Net Finder} procedure and identify a structural map for ${\cal{E}}$ and $r.$ More specifically, $\theta_0=\theta^{\ast}_0$ is computed using the equation $(\ref{hjhiislsldlsls}),$ where $\alpha,\beta$ and $\tau$ are parameters of the identified structural map whereas $c_1$ and $c_2$ are performance parameters of the chosen subspace epsilon net finder.

Let $M^{\ast}=\left(1+\frac{1}{\sqrt{1+\frac{c_2\alpha}{c_1\beta}}}\right)\left(\frac{2c_1\tau\beta}{\alpha^2}+\frac{c_2\tau}{\alpha}\right)+\frac{c_2\tau}{\alpha \sqrt{1+\frac{c_2\alpha}{c_1\beta}}}$ be the value of the performance parameter of the {\sc Weak Epsilon Net Finder} procedure,
which corresponds to setting $\theta_0=\theta^{\ast}_0.$
Let us note that $\theta^{\ast}_0$ depends only on ratios $\frac{c_1}{c_2}$ and $\frac{\beta}{\alpha}$ whereas $M^{\ast}$
depends on $c_1, c_2, \frac{\beta}{\alpha}$ and $\frac{\tau}{\alpha}.$
The smaller the latter two ratios, the smaller $M^{\ast}$ is.

Of course, the approach of the lemma \ref{dskskskkroroeowoqp}
to estimating $\theta_0$ can also be adapted for general
setting of the {\sc Hitting Set} problem for a range space
$(Y,{\cal{R}}),$ where $Y\subset\mathbb{R}^2$ is a
finite set and ${\cal{R}}$ is a family of subsets on the plane.
It is shown in the subsubsection \ref{skskfkgeoqoapdlgls} below that, being applied for
subspaces of the range space $(Y_0,{\cal{N}}_r(E),w),$
this approach provides the smaller performance parameter $M$
than that parameter for the epsilon net finder, resulting from
direct application of the original approach of \cite{pyrga}.

\subsubsection{Identifying a structural map and estimating $\theta_0$}\label{yuieruiusdjk}

Let ${\cal{E}}\subseteq{\cal{E}}_0.$ To build a structural map for ${\cal{E}}$ and $r$
a graph is to be identified for each $E\in{\cal{E}}$ and ${\cal{I}}\subseteq{\cal{N}}_r(E)$
or, equivalently, for each $E\in{\cal{E}},$ $E'\subseteq E$ and $r>0.$ Below it is shown that a map, which assigns
a Delaunay triangulation graph of $E'$ for every segment set $E',$
turns out to be a favourable structural map for which ratios $\frac{\beta}{\alpha}$ and $\frac{\tau}{\alpha}$
are small.

Delaunay triangulations can be defined \cite{breivellers} for planar segment sets
of non-overlapping straight line segments in assumption of their {\it general position}:
\begin{enumerate}
\item no quadruple exists of segments from $E$ which is touched by any single disk;

\item the set is in general position of endpoints of segments from $E.$
\end{enumerate}

\begin{definition}
Let $F$ be the maximal set of open non-overlapping triangles each of which has its endpoints
lying on 3 distinct segments from $E$ and its open circumscribing disk does not intersect any
segment from $E.$ The complement $${\rm{conv}}\,\left(\bigcup\limits_{e\in E}e\right)\backslash \left(\bigcup\limits_{f\in F}f\cup \bigcup\limits_{e\in E}e\right)$$
is a union of a set $U$ of relatively open connected components, where closure of each component intersects exactly
two segments from $E.$ A triple $T_E=(E,U,F)$ is called a Delaunay triangulation of the segment set $E.$
A graph $G_E=(E,U_2)$ is called a graph for $T_E,$ where $U_2$ consists of those unordered pairs $e,e'\in
E$ for which there exists $u\in U$ with $e\cap {\rm{cl}}\,u\neq\varnothing$ and $e'\cap {\rm{cl}}\,u\neq\varnothing.$
\end{definition}
It is shown in the section 4 of \cite{breivellers} that a Delaunay triangulation $T_E$ is
uniquely defined by a set $E$ of non-overlapping segments in general position.
Moreover, its graph $G_E$ is planar and dual to the graph of Voronoi diagram for $E.$

Below a parameter $\sigma=\sigma({\cal{E}})$ is given on which $\beta=\beta({\cal{E}})$ depends.
Let $m(E')=\left|\left\{e\in E':e\cap {\rm{bd}}\,{\rm{conv}}\,\left(\bigcup\limits_{e\in E'}e\right)\neq\varnothing\right\}\right|$ for $E'\subseteq E.$
Let also $$\sigma=\sigma({\cal{E}})=\inf\limits_{E'\subseteq E, E\in{\cal{E}}}\frac{m(E')}{|E'|}.$$

\begin{lemma}\label{ks234321345}
For any class ${\cal{E}}\subseteq{\cal{E}}_0$ and $r>0$ there exists a
structural map with $\beta=3-\sigma$
and $\alpha=\tau=1.$
\end{lemma}
\begin{proof}
It can be assumed that $E\in{\cal{E}}$ contains
pairwise non-intersecting segments. Indeed, the
{\sc PEH} problem can be considered for the same segment set $E$ with radius $r+\rho$ instead of $r,$
where a small constant $\rho>0$ guarantees
meeting the following conditions:
\begin{enumerate}
\item $\{N\in{\cal{N}}_r(E):y\in N\}=\{N\in{\cal{N}}_{r+\rho}(E):y\in N\}$
for every $y\in Y_0;$

\item a subset of ${\cal{N}}_r(E)$ has empty intersection iff the respective subset of
${\cal{N}}_{r+\rho}(E)$ has no common points.
\end{enumerate}
Then each segment $[v_1,v_2]\in E$ is replaced by the segment $[v_1+\kappa(v_2-v_1),v_2-\kappa(v_2-v_1)]$
(denote the set of segments thus obtained by $E_{\kappa}),$ where a small $\kappa=\kappa(v_1,v_2)>0$ guarantees
that the same conditions are met for $N_{r+\rho}\left(E_{\kappa}\right)$ instead of ${\cal{N}}_{r+\rho}(E),$
$y\in{\rm{int}}\,\bigcap\limits_{N\in {\cal{N}}_{r+\rho}(E_{\kappa}):y\in N}N$ for each $y\in Y_0$
and the parameter $\sigma$ is kept unchanged.
Suppose a structural map is defined for $E_{\kappa}$ and $r+\rho$
with $\alpha=\tau=1,\beta=3-\sigma$
and a graph $G_{{\cal{I}}_{\rho\kappa}}$ corresponds to a subset ${\cal{I}}_{\rho\kappa}$ under this map, where
segments from $E_{\kappa}({\cal{I}}_{\rho\kappa})$ are shortened segments from $E({\cal{I}})$ for ${\cal{I}}\subseteq{\cal{N}}_r(E).$
It is obvious that the same graph can be assigned for the set ${\cal{I}}$
taking into account that $\{N\in{\cal{I}}:y\in N\}=\{N\in{\cal{I}}_{\rho\kappa}:y\in N\}$ for any $y\in Y_0.$
Thus, it is assumed with slight abuse of terminology
that $y\in{\rm{int}}\,\bigcap\limits_{N\in {\cal{I}}:y\in N}N$ for each $y\in Y_0$ and $E$ consists of non-overlapping segments.
Moreover, it can be assumed that the segment set $E$ is
in general position. To achieve this, segments from $E$ can be slightly shifted keeping unchanged values of structural parameters $\alpha,\tau$ and $\sigma$ without breaking empty/nonempty intersections of subsets of objects from ${\cal{N}}_r(E).$

Let ${\cal{I}}\subseteq {\cal{N}}_r(E)$ and $G_{{\cal{I}}}$ be the maximal graph which is obtained from
a Delaunay triangulation graph for $E({\cal{I}})$
by removing redundant multiple edges. Due to the theorem 3 from \cite{breivellers},
$G_{{\cal{I}}}$ contains at most $3|E({\cal{I}})|-k-3$ edges,
where $k$ denotes the number of those edges of ${\rm{conv}}\,\left(\bigcup\limits_{e\in E({\cal{I}})}e\right),$
which are not segments of $E({\cal{I}}).$ As segments from $E({\cal{I}})$ are non-intersecting,
$m(E({\cal{I}}))\leq k$ and $G_{{\cal{I}}}$ has at most $\beta|E({\cal{I}})|$ edges.

Let ${\cal{I}}(y)=\{I\in{\cal{I}}:y\in I\}$ and $G_{{\cal{I}}}(y)$ be a subgraph of $G_{{\cal{I}}}$ induced
by the subset $E({\cal{I}}(y))$ as its set of $n_{{\cal{I}}}(y)$ vertices.
Let us prove that $\alpha=\tau=1$ by induction on $n_{{\cal{I}}}(y)$ for every $y\in Y_0.$
The case $n_{{\cal{I}}}(y)=1$ is obvious. Let us assume that any graph $G_{{\cal{I}}}(y)$
with $n_{{\cal{I}}}(y)\leq k$ vertices contains at least $n_{{\cal{I}}}(y)-1$ edges (for any $r)$ and suppose that $n_{{\cal{I}}}(y)=k+1.$

By perturbing $y$ within ${\rm{int}}\,\bigcap\limits_{I\in{\cal{I}}(y)}I$
it can be achieved that segments from $E({\cal{I}}(y))$ are at distinct distances from $y.$
Besides, let $e_0(y)\in E({\cal{I}}(y))$ be the farthest (among segments of $E({\cal{I}}(y)))$ segment
from $y.$ Denote by $y_0$ Euclidean projection
of $y$ onto $e_0(y).$ There is a point $y_1\in [y,y_0]$ which is equidistant
from $e_0(y)$ and some segment $e(y)\in E({\cal{I}}(y))\backslash\{e_0(y)\}$
whereas none of segments of $E({\cal{I}})\backslash \{e_0(y),e(y)\}$ is within the distance $\|y_0-y_1\|_2$ from $y_1$
(again may be after a small perturbation of $y).$ Due to duality between Delaunay triangulations
and Voronoi diagrams considered over the same segment set (see the theorem 4 from \cite{breivellers}),
we get that $G_{{\cal{I}}}(y)$ contains an edge which connects $e_0(y)$ and $e(y).$
Obviously, each segment of $E({\cal{I}}(y))$ has nonempty intersection with the radius $r$ disk centered at $y.$
Let $\gamma>0$ be so small such that $r_0=\|y-y_0\|_2-\gamma$ radius disk centered at $y$ intersects all $n_{{\cal{I}}}(y)-1$ segments from
$E({\cal{I}}(y))\backslash\{e_0(y)\}.$
Let $G_{{\cal{I}}}(y,\gamma)$ be the subgraph of $G_{{\cal{I}}}(y)$
induced by segments of $E({\cal{I}}(y))\backslash\{e_0(y)\}.$
Applying inductive assumption, one gets that $G_{{\cal{I}}}(y,\gamma)$ has at least $n_{{\cal{I}}}(y)-2$ edges.
Thus, the graph $G_{{\cal{I}}}(y)$ contains at least $n_{{\cal{I}}}(y)-1$ edges.
\end{proof}

In the proof of the lemma \ref{ks234321345} a structural map is built for an arbitrary subclass
${\cal{E}}\subseteq{\cal{E}}_0.$ The parameter $\beta$ of this structural map depends on the ${\cal{E}}$-specific parameter
$\sigma.$ This allows to design
${\cal{E}}$-specific implementations of the {\sc Weak Epsilon Net Finder} procedure
with different values of $\theta^{\ast}_0,$ giving
the smaller value $M^{\ast}$ of the performance parameter $M$
than that value in the general case where ${\cal{E}}={\cal{E}}_0.$
Three examples are given below of choice of $\theta_0$ for different classes of sets of segments.
The first example is for the general case ${\cal{E}}={\cal{E}}_0.$ In this case the subspace epsilon net finder
from the subsubsection \ref{qyuuififis} is chosen to work at the step 2 of the {\sc Weak Epsilon Net Finder} procedure.
\smallskip

\noindent {\bf Example 1.} Using the equation $(\ref{hjhiislsldlsls})$ and the lemma \ref{ks234321345}
one gets $\theta^{\ast}_0({\cal{E}}_0)=\frac{1}{\left(3+\frac{\sqrt{39}}{2}\right)}\approx 0.163,$ where
$\alpha=\tau=1,\,\beta=3,\,c_1=8$ and $c_2=2.$
\smallskip

The second example is for a special proper subclass of the class ${\cal{E}}_0.$ It shows how $\theta^{\ast}_0$
is changed when $\frac{\beta}{\alpha}$ varies.

\begin{definition}
A plane graph $G=(V,E)$ is called a {\it generalized outerplane} if
$$e\cap{\rm{bd}}\,{\rm{conv}}\,\left(\bigcup\limits_{e\in E}e\right)\neq\varnothing$$
for any $e\in E.$
\end{definition}

When each segment from $E$ has its both endpoints on the boundary of
${\rm{conv}}\,\left(\bigcup\limits_{e\in E}e\right)$ such a plane graph $G=(V,E)$ is known as an outerplane graph.

The lemma below follows from the lemma \ref{ks234321345}.
\begin{lemma}\label{gakakakeoqoeofodl}
For the class ${\cal{E}}_1$ of edge sets of arbitrary generalized outerplane graphs
there is a structural map with $\alpha=\tau=\sigma=1$ and $\beta=2.$
\end{lemma}
\smallskip

\noindent {\bf Example 2.} Using the equation $(\ref{hjhiislsldlsls})$ one has
$\theta^{\ast}_0({\cal{E}}_1)=\frac{1}{\left(2+\frac{3}{\sqrt{2}}\right)}\approx 0.242,$
where $\alpha=\tau=1,\,\beta=2,\,c_1=8$ and $c_2=2.$
\smallskip

The third example illustrates how $\theta^{\ast}_0$ depends on
the ratio $\frac{c_1}{c_2}$ of performance parameters of subspace epsilon net finder,
chosen to work at the step 2 of the {\sc Weak Epsilon Net Finder} procedure.
Consider the class ${\cal{E}}_2(r)$ of sets of straight line segments in which
for any $E\in{\cal{E}}_2(r)$ either $d(e,e')>r$ or $d(e,e')=0$ for any distinct $e,e'\in E.$
For this class a suitable subspace epsilon net finder is the one from the subsubsection \ref{hkfksiwiaoaofohdov}.
\smallskip

\noindent {\bf Example 3.} In view of the equation $(\ref{hjhiislsldlsls})$
$\theta^{\ast}_0({\cal{E}}_2(r))=\frac{1}{3+3\sqrt{3}}\approx 0.122,$
where $\alpha=\tau=1,\,\beta=3,\,c_1=1$ and $c_2=6.$
\smallskip

\subsection{Performance analysis of the Weak Epsilon Net Finder procedure}\label{fikdksoeofosqp}

As an obvious consequence of lemmas \ref{kob11111111}, \ref{dskskskkroroeowoqp} and \ref{ks234321345} time complexity and space cost can be estimated of the procedure.
\begin{lemma}\label{hhsurigklflskieigodls}
For any $E\in {\cal{E}}_0$ the {\sc Weak Epsilon Net Finder} procedure works in $O\left(\frac{n^3}{\varepsilon}\right)$ time
and linear space with respect to the space cost to store $(Y_0,{\cal{N}},w)$ for ${\cal{N}}\subseteq{\cal{N}}_r(E)$ and $w:Y_0\rightarrow\mathbb{Q}_+.$
\end{lemma}

The lemma below summarizes on $O\left(\frac{1}{\varepsilon}\right)$ upper bounds on size of weak $\varepsilon$-nets, returned by the {\sc Weak Epsilon Net Finder} procedure.
It follows from lemmas \ref{kob11111111}, \ref{kkfkfkdkdksks}, \ref{fsjdkgowoqpspa}, \ref{dskskskkroroeowoqp}, \ref{ks234321345} and \ref{gakakakeoqoeofodl}.
\begin{lemma}\label{kdksksiritfoosos}
Given a range space $(Y_0,{\cal{N}},w)$ for ${\cal{N}}\subseteq{\cal{N}}_r(E)$ and $w:Y_0\rightarrow\mathbb{Q}_+,$ the {\sc Weak Epsilon Net Finder} procedure with a suitable subspace epsilon net finder, working at its step 2,
returns a weak $\varepsilon$-net for $(Y_0,{\cal{N}},w)$ of size at most $\frac{M}{\varepsilon},$ where
\begin{enumerate}
\item $M=50+52\sqrt{\frac{12}{13}}$ for $E$ being an edge set of a plane graph;

\item $M=34+24\sqrt{2}$ in the case where $E$ is an edge set of a generalized outerplane graph;

\item $M=12+6\sqrt{3}$ if each pair of distinct segments from $E$ is at
Euclidean distance either zero or more than $r$ from each other;

\item $M=\frac{144}{5}+32\sqrt{\frac{3}{5}}$ in the case where $E$ is an edge set of
any subgraph of a Delaunay triangulation;

\item $M=20+12\sqrt{2}$ in the case of $E,$ being an edge set of any subgraph of
a generalized outerplane Delaunay triangulation.
\end{enumerate}
\end{lemma}

\begin{remark}
If $E$ is an edge set of any subgraph of an outerplane graph,
the {\sc Weak Epsilon Net Finder} procedure can be slightly modified
to get much smaller $M$ than that parameter for the case of generalized outerplane graphs. This improvement consists in applying a special algorithm of constructing a maximal $\delta$-independent set ${\cal{I}}$ at its step 1. The algorithm is based on the fact that for every subset $E'\subseteq E$ a segment can be chosen from $E'$, being an edge of ${\rm{conv}}\,\bigcup\limits_{e\in E'}e.$ Here the {\sc Subspace Weak Epsilon Net Finder} procedure is used from the subsubsection \ref{qyuuififis} at the step 2 of the {\sc Weak Epsilon Net Finder} procedure to get hitting sets for ${\cal{N}}_{\delta,I}$ of size at most $\frac{4w(I)}{\delta w(Y_0)}.$
\end{remark}

\subsubsection{Comparing the performance parameter of the Weak Epsilon Net Finder procedure with related work}\label{skskfkgeoqoapdlgls}

As an alternative to the {\sc Weak Epsilon Net Finder} procedure consider
an epsilon net finder, resulting from direct application
of the original approach from \cite{pyrga} to subspaces of $(Y_0,{\cal{N}}_r(E),w).$
Within this procedure a partition is performed of ${\cal{N}}_{\varepsilon}$ into much smaller subsets than
it is done in the {\sc Weak Epsilon Net Finder} procedure. Namely, ${\cal{N}}_{\varepsilon}$ is first partitioned into groups
$${\cal{N}}^s_{\varepsilon}=\{N\in{\cal{N}}:2^{s+1}\varepsilon w(Y_0)\geq w(N)>2^s\varepsilon w(Y_0)\}$$
for $s=0,\ldots,\log_2\frac{1}{\varepsilon}-1;$ then each group ${\cal{N}}^s_{\varepsilon}$ is partitioned into
subsets in the same way as ${\cal{N}}_{\varepsilon}$ is at the step 1 of the {\sc Weak Epsilon Net Finder} procedure.
More precisely, the corresponding partition of ${\cal{N}}^s_{\varepsilon}$ is generated using some maximal
$\delta_s$-independent set ${\cal{I}}_s$ for $(Y_0,{\cal{N}}^s_{\varepsilon},w),$ where $\delta_s=\theta_02^s\varepsilon.$
Applying results from \cite{komlos}, a hitting set is generated of constant size for each element of the partition of
${\cal{N}}^s_{\varepsilon}$ for every $s.$ Omitting details,
the original approach of \cite{pyrga} gives an epsilon net finder whose performance parameter is equal to $M_1=\frac{16d\tau\beta\log\frac{4\beta}{\alpha}}{\alpha^2},$ where $d$ is VC-dimension of $(Y_0,{\cal{N}}_r(E)).$

For $c_2=0$ the equation $(\ref{hjhiislsldlsls})$ gives $\theta^{\ast}_0=\frac{\alpha}{2\beta}$ and the bound $|C_{\theta^{\ast}_0}|\leq\frac{M_2}{\varepsilon}$ follows from the lemma \ref{dskskskkroroeowoqp}
for $M_2=\frac{4c_1\tau\beta}{\alpha^2}.$
Therefore $M_2\leq M_1$ when $c_1\leq 4d\log\frac{4\beta}{\alpha}.$
Of course, the {\sc Subspace Weak Epsilon Net Finder} procedure from the subsubsection \ref{qyuuififis}
can be considered as having performance parameters $c_1=10$ and $c_2=0.$
As $d\geq 3,$ $\frac{M_1}{M_2}\geq\approx 3.$

\section{Computing a proper weighting in the Piercing Hippodromes algorithm}\label{pkgkslsls}

Apart from constructing weak epsilon nets, another important task is performed at the step 4 of the {\sc Piercing Hippodromes} algorithm.
Given ${\cal{N}}\subseteq{\cal{N}}_r(E),$ it consists in computing a proper weight map $w_f:Y_0\rightarrow\mathbb{Q}_+$
such that $w_f(N)=\Omega\left(\frac{w_f(Y_0)}{{\rm{OPT}}(Y_0,{\cal{N}})}\right)$ for all $N\in {\cal{N}}.$
In \cite{bronnimann} it is shown (in much more general
setting) by Bronnimann and Goodrich and later in  \cite{agarwal} by Agarwal and Pan
that such a map always exists and can be computed algorithmically.

In this section an iterative reweighting procedure is described to be applied at the step 4 of the {\sc Piercing Hippodromes} algorithm.
It is
a slightly modified version of the procedure, applied within the Agarwal-Pan algorithm from \cite{agarwal}. An analogous modification is used in \cite{bus2}.
To present work of the procedure, the idea of its original version from \cite{agarwal} is briefly reviewed first.

\subsection{Iterative reweighting procedure in the original Agarwal-Pan algorithm}

Being applied for range subspaces of $(Y_0,{\cal{N}}_r(E)),$ the iterative reweighting procedure of the original Agarwal-Pan algorithm
accepts a positive integer parameter $k$ and a range space $(Y_0,{\cal{N}})$ for ${\cal{N}}\subseteq{\cal{N}}_r(E)$
as its input.
It updates weights of points from $Y_0,$ trying
to increase ratios $\frac{w(N)}{w(Y_0)}$ uniformly for all $N\in{\cal{N}}$
to get them all above the threshold $\frac{1}{2ke}.$

The procedure works as follows.
Initially, $w(y):=w_0(y)$ for all $y\in Y_0,$ where $w_0$ is the unit weight map.
Within the procedure steps are repeated for objects from ${\cal{N}},$
which are called {\it weight updating} steps.
Weight updating steps are grouped in the so called rounds. Each round contains
at most $T=2k$ consecutive weight updating steps.
Within each round objects from ${\cal{N}}$ are processed one by one.
It means that after performing weight updating steps for a particular
object from ${\cal{N}}$ it is not processed later in the current round.

Given an object $N\in{\cal{N}}$ such that $w(N)\leq\frac{w(Y_0)}{2k},$
an update $w(y):=2w(y)$ is performed for each $y\in N\cap Y_0$
during a particular weight updating step for $N.$
Weight updating steps for $N$ continue to be performed until either $w(N)>\frac{w(Y_0)}{2k}$
or the total number of weight updating steps (including those steps
for previously processed objects) equals to $T$ in the current round.

When the current round is finished without
achieving the limit $T$ on the total number of performed weight updating steps,
a final weight map $w_k:=w$ on $Y_0,$ the value $\varepsilon_k=\frac{1}{2ke}$ are returned
and the procedure stops. In this case it can be proved \cite{agarwal} that $w_k(N)>\varepsilon_kw(Y_0)$ for all $N\in{\cal{N}}.$
Otherwise, if the current round contains $T$ consecutive weight updating steps,
the procedure either proceeds to the next round or stops depending on
whether the total number of performed rounds does not exceed $2\log\frac{|Y_0|}{k}+1.$
If the procedure stops, exceeding that limit, it reports that
no suitable weight map can be computed for a given $k.$

Ability of the original Agarwal-Pan algorithm to identify
a value $k=O({\rm{OPT}}),$ for which the iterative reweighting procedure is able
to compute the corresponding suitable weight map on $Y_0,$
relies on a basic observation. It can formulated as follows
for $(Y_0,{\cal{N}}):$ for any positive integer $k$ the above procedure finishes working,
performing less than $T$ steps in the final round if a $k$-element
hitting set $C\subseteq Y_0$ exists for ${\cal{N}};$
besides, no $k$-element hitting set exists for ${\cal{N}},$
when the procedure performs exactly $T$ weight updating steps in its final round.
This observation is based on the fact that $w(C)$ grows faster than
$w(Y_0)$ as weight updating steps are proceeding, thus, restricting
the number of those steps (see the lemma 2.1 from \cite{agarwal} for details).

\subsection{Modified iterative reweighting procedure}

The problem with the original iterative reweighting procedure from \cite{agarwal}
is that applying it at the step 4 of the {\sc Piercing Hippodromes} algorithm
only guarantees an upper bound on its
approximation factor of $4Me$ (see the procedure analysis in \cite{agarwal}),
where $M$ is the performance parameter of the {\sc Weak Epsilon Net Finder} procedure.
Below a slight modification is provided of the original
procedure to achieve a better upper bound, which is very close to $M.$
The only difference between this procedure and the procedure
of the original algorithm is that weights are only slightly modified of objects
from ${\cal{N}}$ during a particular weight updating step
by just multiplying them on the factor of $1+\lambda_1,$ where $\lambda_1$ is some small absolute constant to be
chosen later; moreover, the condition $w(N)>\frac{w(Y_0)}{\lambda k}$ is verified before performing each weight updating step for some
constant $\lambda>1,\,\lambda\approx 1.$

Pseudo-code of the procedure is given below. Let $\kappa=2\lambda-\lambda\lambda_1-2>0.$

\smallskip

\hrule

\smallskip

\noindent {\sc Iterative Reweighting.}
\smallskip

\hrule
\smallskip

\noindent {\bf Input:} a parameter $k$ and a range space $(Y_0,{\cal{N}})$ for some
${\cal{N}}=\{N_1,\ldots,N_m\}\subseteq{\cal{N}}_r(E);$

\noindent {\bf Output:} if a weight map $w_k=w_k(\cdot|Y_0,{\cal{N}})$
can be computed for which $w_k(N)>\varepsilon_kw_k(Y)$ for all $N\in{\cal{N}},$
where $\varepsilon_k\geq\frac{1}{\lambda ke^{2\lambda_1/\lambda}},$
it returns such $w_k$ and $\varepsilon_k;$ otherwise, it reports that no suitable weight map
can be computed for a given $k.$
\smallskip

\hrule
\begin{enumerate}

\item set $t:=1;$ // {\it round counter}

\item set $w:=w_0$ // set $w$ equal to the unit weight map

\item set $s:=0$ and $p:=1;$ // {\it counters for weight updating steps and objects processed in the current round}

\item verify the inequality
\begin{equation}\label{klsks}
w(N_p)\leq\frac{w(Y_0)}{\lambda k}
\end{equation}
and
if it is true, set $s:=s+1,$ $w(y):=w(y)(1+\lambda_1)$ for all
$y\in N_p\cap Y_0$ and continue repeating the step 4 while $s<2k$ and $(\ref{klsks})$
still holds;

\item for $s<2k$ examine whether the equality $p=m$ holds:
if it does, return $w_k:=w,\,\varepsilon_k:=\frac{1}{\lambda ke^{\lambda_1s/(\lambda k)}};$ otherwise, set $p:=p+1$ and go to step 4;

\item if $s=2k,$ check if $t>\frac{\lambda\ln(|Y_0|/k)}{\lambda_1\kappa}$ holds:
when it does, report that no suitable map can be computed for a given value of $k;$
otherwise, set $t:=t+1$ and go to step 3.

\end{enumerate}
\hrule
\smallskip

Finally note that computing of $w(N_p)$ and generating points
from $N_p\cap Y_0$ at the procedure step 4
is done in a straightforward way. For example, when reporting
points from $N_p\cap Y_0$ each point $y\in Y_0$ is verified if
it is contained in $N_p:$ if yes, the point $y$ is reported.

\section{Constant factor approximation algorithms for the PEH problem and their performances}\label{fhdhhsueurusiis}

In this section our main algorithmic results are formulated.
The first result is a constant factor approximation
for the {\sc PEH} problem on a set of $r$-hippodromes whose underlying straight line segments are allowed
to intersect at most at their endpoints. More accurate approximations are also provided
for special geometric configurations of segments.

\begin{theorem}\label{ksakldkekeksa}
Applying the {\sc Iterative reweighting} procedure at the step 4 of the
{\sc Piercing Hippodromes} algorithm and the {\sc Weak Epsilon Net Finder}
procedure with a suitable subspace epsilon net finder at the algorithm steps 2 and 7 for $\theta_0$
computed according to the equation $(\ref{hjhiislsldlsls}),$ an approximation algorithm can be obtained for the {\sc PEH}
problem such that for any small $\nu>0$ it works in
$O\left(\left(n^2+\frac{n\log n}{\nu^2}+\frac{\log n}{\nu^3}\right)n^2\log n\right)$ time,
$O\left(\frac{n^2\log n}{\nu}\right)$ space and provides an
\begin{enumerate}
\item $\left(50+52\sqrt{\frac{12}{13}}+\nu\right)$-approximate solution in the case where $E$ is an edge set of a plane graph;

\item $(34+24\sqrt{2}+\nu)$-approximate solution for $E,$ being an edge set of a generalized outerplane graph;

\item $\left(12+6\sqrt{3}+\nu\right)$-approximate solution if each pair of distinct segments from $E$ is at Euclidean distance either zero or more than $r$ from each other;

\item $\left(\frac{144}{5}+32\sqrt{\frac{3}{5}}+\nu\right)$-approximate solution in the case where $E$ is an edge set of any subgraph of a Delaunay triangulation;

\item $(20+12\sqrt{2}+\nu)$-approximate solution for $E,$ being an edge set of any subgraph of a generalized outerplane Delaunay triangulation.
\end{enumerate}

\end{theorem}
\begin{proof}
The proof is organized in two stages.

\noindent {\sc Stage 1.} Let ${\cal{E}}\subseteq{\cal{E}}_0.$ Suppose that the {\sc Weak Epsilon Net Finder}
procedure produces weak $\varepsilon$-nets for subspaces of $(Y_0,{\cal{N}}_r(E),w)$
of size at most $\frac{M}{\varepsilon}$ for any $E\in{\cal{E}}.$
At the first stage
it is proved that
for any small constants $\mu_0,\lambda_1>0$ and a constant $\lambda>1$
for which $\kappa=2\lambda-\lambda\lambda_1-2>0,$
the {\sc Piercing Hippodromes} algorithm
is $M(\mu_0+\lambda e^{2\lambda_1/\lambda})$-approximate,
works in
\begin{equation}\label{fjdjsjsjfuruwuyq}
O\left(\left(\frac{n^3\log n+\frac{n^2\log n}{\mu_0}}{\lambda_1\kappa}+
n^3{\rm{OPT}}\right)\log {\rm{OPT}}\right)
\end{equation}
time and $O\left(\frac{n^2\log n}{\kappa}\right)$ space.
Its proof is analogous to proofs of
lemmas 2.1 and 3.1 from \cite{agarwal} and the lemma 2.1 from \cite{bus2}.

First, an approximation ratio is estimated of the {\sc Piercing Hippodromes} algorithm.
For a given $k$ it is proved
that at most $\left\lceil\frac{2\lambda k\ln|Y_0|}{\lambda_1\kappa}\right\rceil$ weight
updates are done in the {\sc Iterative Reweighting} procedure at its step 4 for sets from ${\cal{N}}_k,$
summing over all rounds, under assumption that there is a $k$-element hitting set $S_k\subseteq Y_0$ for ${\cal{N}}_k.$
Indeed, let $w^F_k(Y_0)$ (respectively, $w^F_k(S_k))$ be the weight $w(Y_0)$
(respectively, be the weight $w(S_k))$ observed at the end of the final round in which the {\sc Iterative Reweighting} procedure stops.
Let $z_k$ be also the total number of times that weights are updated (i.e. multiplied by $1+\lambda_1)$
of objects from ${\cal{N}}_k$ in all rounds. The following inequality holds true:
\begin{equation}\label{kdhdhshshah}
w^F_k(Y_0)\leq |Y_0|\left(1+\frac{\lambda_1}{\lambda k}\right)^{z_k}.
\end{equation}
From the other hand, one gets $$\frac{w^F_k(S_k)}{k}=\frac{\sum\limits_{c\in S_k}(1+\lambda_1)^{z_k(c)}}{k}\geq
(1+\lambda_1)^{\sum\limits_{c\in S_k}z_k(c)/k}\geq (1+\lambda_1)^{z_k/k}$$
where $z_k(c)$ denotes the number of times that $w(c)$ is updated. As $w^F_k(S_k)\leq w^F_k(Y_0)$
the inequality holds true $$k(1+\lambda_1)^{z_k/k}\leq |Y_0|\left(1+\frac{\lambda_1}{\lambda k}\right)^{z_k}.$$
Resolving it with respect to $z_k,$ one gets $z_k\leq\frac{2\lambda k\ln(|Y_0|/k)}{\lambda_1\kappa}$\footnote{Here the doubled inequality $x-\frac{x^2}{2}\leq\ln(1+x)\leq x$ is used for small $x>0.$}.
Thus, once the inequality ${\rm{OPT}}(Y_0,{\cal{N}}_k)\leq k$ holds,
after at most $\left\lceil \frac{2\lambda k\ln(|Y_0|/k)}{\lambda_1\kappa}\right\rceil$
weight updates the {\sc Iterative Reweighting} procedure stops, returning $w_k$ and $\varepsilon_k.$
At the step 5 of the {\sc Piercing Hippodromes} algorithm
true and false values are explored of the flag variable
to localize ${\rm{OPT}}(Y_0,{\cal{N}}_{k_p}).$
The algorithm finally gets $k_f\leq {\rm{OPT}}(Y_0,{\cal{N}}_{k_f})$ and outputs the
corresponding weight map $w_{k_f}$ and the parameter
$\varepsilon_{k_f}\geq\frac{1}{\lambda k_fe^{2\lambda_1/\lambda}}.$

Let $C\subset\mathbb{R}^2$ be the set of size at most $Mk_f\left(\mu_0+\lambda e^{\lambda_1s/(\lambda k_f)}\right),$
which is returned at the step 8 of the {\sc Piercing Hippodromes} algorithm.
It is proved below that $C$ is a hitting set for ${\cal{N}}_r(E).$
Let $w^I_{k_f}(Y_0)$ be $w(Y_0)$ observed at the beginning of the (final) round
within which the {\sc Iterative Reweighting} procedure returns a proper weight map $w_{k_f}$
and the parameter $\varepsilon_{k_f}.$
As $s<2k_f$ in that round, one gets
$$w^F_{k_f}(Y_0)\leq\left(1+\frac{\lambda_1}{\lambda k_f}\right)^{s}w^I_{k_f}(Y_0)\leq e^{\lambda_1s/(\lambda k_f)}w^I_{k_f}(Y_0)<e^{2\lambda_1/\lambda}w^I_{k_f}(Y_0).$$
From the other hand, $w_{k_f}^F(N)>\frac{w^I_{k_f}(Y_0)}{\lambda k_f}\geq\frac{w^F_{k_f}(Y_0)}{\lambda k_fe^{\lambda_1s/(\lambda k_f)}}>\frac{w^F_{k_f}(Y_0)}{\lambda k_fe^{2\lambda_1/\lambda}}$ for any $N\in{\cal{N}}_{k_f},$
where $w_{k_f}^F(N)$ denotes weight of $N$ at the end of the final round.
Thus, the algorithm output $C$ gives a hitting set for ${\cal{N}}_r(E)$ and:
$$|C|\leq Mk_f\left(\mu_0+\lambda e^{\lambda_1s/(\lambda k_f)}\right)\leq M\left(\mu_0+\lambda e^{\lambda_1s/(\lambda k_f)}\right){\rm{OPT}}(Y_0,{\cal{N}}_{k_f})\leq $$ $$\leq M\left(\mu_0+\lambda e^{2\lambda_1/\lambda}\right){\rm{OPT}}(Y_0,{\cal{N}}_{k_f})\leq M\left(\mu_0+\lambda e^{2\lambda_1/\lambda}\right){\rm{OPT}}.$$

Now bounds are established for time complexity of the {\sc Piercing Hippodromes} algorithm.
Its step 1 requires $O(n^2)$ time.
Binary search over steps 2-4 requires
$O\left(\left(n^3{\rm{OPT}}+B\right)\log{\rm{OPT}}\right)$ time,
where $B$ is the maximal time complexity of the {\sc Iterative Reweighting} procedure. Indeed, the algorithm step 2
takes $O(n^3{\rm{OPT}})$ time by the lemma \ref{hhsurigklflskieigodls}
whereas its step 3 requires $O(n\,{\rm{OPT}})$ time.

The step 7 of the {\sc Piercing Hippodromes} algorithm takes
$O(n^3{\rm{OPT}})$ time by the lemma \ref{hhsurigklflskieigodls}.
Thus, to estimate total complexity of the algorithm it remains to estimate $B.$
Recall that the {\sc Iterative reweighting} procedure call is
for the space $(Y_0,{\cal{N}}_k)$ for $k=O({\rm{OPT}}).$
For any $t$ $t$th round consists of at most $|{\cal{N}}_k|$ operations
to compute weights of objects from ${\cal{N}}_k.$ As $t\leq \frac{\lambda\ln(|Y_0|/k)}{\lambda_1\kappa}+1,$ overall time complexity
is of the order $O\left(\frac{\lambda|{\cal{N}}_k|n^2\log n}{\lambda_1\kappa}\right)$ for
such operations at the procedure step 4. One has $|N\cap Y_0|\leq\frac{|Y_0|}{\mu_0 k}$ for every $N\in{\cal{N}}_k.$
As $s\leq 2k,$ an $O\left(\frac{\lambda n^2\log n}{\mu_0\lambda_1\kappa}+\frac{\lambda|{\cal{N}}_k|n^2\log n}{\lambda_1\kappa}\right)$
time is spent for operations to update point weights at the step 4. Therefore $B=O\left(\frac{\lambda \left(n^3\log n+\frac{n^2\log n}{\mu_0}\right)}{\lambda_1\kappa}\right).$

As for space cost of the {\sc Piercing Hippodromes} algorithm, note that $$w(Y_0)\leq |Y_0|^{1+2/\kappa}e^{2\lambda_1/\lambda}$$
substituting $z_k=\left(\frac{\lambda\ln (|Y_0|/k)}{\lambda_1\kappa}+1\right)2k$ into the bound $(\ref{kdhdhshshah})$ for $w^F_k(Y_0).$
Of course, as $w(y)=(1+\lambda_1)^{z_k(y)}$ for $y\in Y_0,$ it can be shown that
$$\sum\limits_{y\in Y_0}\ln w(y)\leq|Y_0|\ln\frac{w(Y_0)}{|Y_0|}=
O\left(\frac{|Y_0|\ln|Y_0|}{\kappa}\right).$$

\noindent {\sc Stage 2.} Now one is ready to prove the first statement of the theorem.
The other theorem statements can be proved analogously, using the lemma \ref{kdksksiritfoosos}.
Consider an implementation of the {\sc Weak Epsilon Net Finder} procedure with the {\sc Subspace Weak Epsilon Net Finder} procedure from the subsubsection \ref{qyuuififis}, working at its step 2. Choose $\theta_0$ as in the example 1 from the subsubsection \ref{yuieruiusdjk}.
By the lemma \ref{kdksksiritfoosos} the {\sc Weak Epsilon Net Finder} procedure outputs weak $\varepsilon$-nets
of size at most $\left(50+52\sqrt{\frac{12}{13}}\right)\frac{1}{\varepsilon}.$ Due to shown in the first stage,
it implies that the {\sc Piercing Hippodromes} algorithm is
$\left(50+52\sqrt{\frac{12}{13}}\right)(\mu_0+\lambda e^{2\lambda_1/\lambda})$-approximate.
One
can adjust its parameters $\mu_0,\lambda$ and $\lambda_1$ to get an $\left(50+52\sqrt{\frac{12}{13}}+\nu\right)$-approximate algorithm
without affecting orders of its time and space complexities
for any small $\nu>0.$ More specifically, setting $\lambda:=1+\frac{\nu}{300},\,\lambda_1=\mu_0:=\frac{\nu}{1200},$
it can be shown that $\left(50+52\sqrt{\frac{12}{13}}\right)\left(\frac{\nu}{1200}+\left(1+\frac{\nu}{300}\right)e^{\nu/600}\right)\leq
50+52\sqrt{\frac{12}{13}}+\nu$ for small $\nu>0.$ Moreover, it gives claimed dependencies
of the algorithm time and space cost on $\nu.$

\end{proof}

\section{Conclusion}

Constant factor approximations are proposed for a special NP- and W[1]-hard
geometric piercing problem on a set of $r$-hippodromes whose underlying straight line segments are allowed to intersect at most
at their endpoints. More accurate approximations are also provided for special
configurations of segments, forming edge sets of generalized outerplane graphs and Delaunay triangulations.
They demonstrate a competitive combination of guaranteed constant approximation
factor and time complexity being compared with known local search and epsilon net based
approximation algorithms for the {\sc Hitting Set} problem on a set of pseudo-disks.
We hope that our $O(1)$-approximations can be expedited by incorporating
clever geometric data structures.

\appendix

\section{Proof of the lemma \ref{gjfkdkksaadqqrsra}}

\begin{proof}
Let ${\cal{J}}=\{J_i\}_{i=1}^n$ be a set of bounded intervals on the real line, i.e. $J_i=[a_i,b_i],\,i=1,\ldots,n.$
Let ${\cal{J}}'$ be its subset of intervals
which is maximal with respect to inclusion and does not contain pairs of
intervals $I$ and $J$ with either $I\subseteq J$ or $J\subset I.$
Removing such pairs from ${\cal{J}}$ can be done in $O(n\log n)$ time and
$O(n)$ space. Indeed, an interval $[a,b]$ can be represented by a point $(a,b)$
on the $xy$-plane above the straight line $y=x;$ checking if an interval
$[a,b]$ contains some other interval $[c,d]$ is equivalent to checking
if the axis-parallel rectangle contains a point $(c,d)$
whose left upper vertex is $(a,b)$ and right lower vertex is $(b,a).$
This check can be done using data structures for processing of
orthogonal range emptiness queries on $n$-point sets in
$O(\log n)$ time and $O(n)$ space with preliminary preprocessing
in $O(n\log n)$ time \cite{Blelloch}.

Secondly, we get lower and upper ends of intervals
from ${\cal{J}}'$ sorted in a single sequence, set $H:=\varnothing$
and ${\cal{P}}:={\cal{J}}'.$ Then, doing sequentially until ${\cal{P}}=\varnothing,$
an interval $I_k=[a_k,b_k]\in {\cal{P}}$ is selected at step $k$
with the maximal upper end $b_k;$ its lower end $a_k$
is added to the hitting set $H$ and intervals are excluded from ${\cal{P}}$ which are hit by $a_k.$
When ${\cal{P}}=\varnothing,$ let $Q=\{I_k\}\subset{\cal{J}}$ be the set of non-overlapping intervals,
thus, constructed.
We get that $H$ is the minimum cardinality hitting set for ${\cal{J}}.$

Summarizing on the complexity of computing of $H,$ we note that
sorting of interval ends from ${\cal{J}}'$ can obviously be done in $O(|{\cal{J}}'|\log |{\cal{J}}'|)$ time. Moreover,
when reporting those intervals from ${\cal{P}},$
which contain $a_k,$
we first start with the interval from ${\cal{P}}$
having the second maximal upper end. Thus, it takes
$O(|{\cal{J}}'|)$ overall time for reporting such intervals.
\end{proof}

\end{document}